\documentclass[11pt, final]{article}
\usepackage{amsthm,amsmath,amssymb,booktabs,xcolor,graphicx}
\usepackage{bbm}
\usepackage{listings}
\usepackage{xcolor}
\usepackage{appendix}
\usepackage{enumerate}
\usepackage[shortlabels]{enumitem}
\usepackage{verbatim}
\usepackage[ruled,vlined]{algorithm2e}
\usepackage{algorithmic}

\usepackage[margin=1in]{geometry}
\usepackage[utf8]{inputenc}
\usepackage[T1]{fontenc}

\usepackage[textsize=small,backgroundcolor=orange!20]{todonotes}
\usepackage[colorlinks=true, allcolors=blue]{hyperref}
\usepackage{url}
\usepackage{etoolbox}
\usepackage{appendix}
\usepackage[nameinlink, noabbrev,capitalize]{cleveref}
\crefname{equation}{}{} 
\AtBeginEnvironment{appendices}{\crefalias{section}{appendix}} 

\usepackage[color]{showkeys} 

\colorlet{refkey}{orange!20}
\colorlet{labelkey}{blue!30}

\numberwithin{equation}{section}
\newtheorem{theorem}{Theorem}[section]
\newtheorem{proposition}[theorem]{Proposition}
\newtheorem{lemma}[theorem]{Lemma}

\crefname{claim}{Claim}{Claims}

\newtheorem{corollary}[theorem]{Corollary}

\newtheorem*{question*}{Question}

\theoremstyle{definition}
\newtheorem{definition}[theorem]{Definition}

\newtheorem*{definition*}{Definition}

\theoremstyle{remark}
\newtheorem*{remark}{Remark}

\newcommand{\abs}[1]{\left\lvert#1\right\rvert}

\newcommand{\snorm}[1]{\lVert#1\rVert}

\newcommand{\mb}{\mathbb}
\newcommand{\mbf}{\mathbf}

\newcommand{\mc}{\mathcal}
\newcommand{\mf}{\mathfrak}

\newcommand{\on}{\operatorname}
\newcommand{\wh}{\widehat}
\newcommand{\wt}{\widetilde}

\allowdisplaybreaks

\allowdisplaybreaks

\author{
Vishesh Jain \\
Stanford University \\
\texttt{vishesh.vj@gmail.com}\\
\and
Natesh S. Pillai\\
Harvard University\\
\texttt{pillai@fas.harvard.edu}
\and
Ashwin Sah \\
Massachusetts Institute of Technology \\
\texttt{asah@mit.edu}
\and
Mehtaab Sawhney \\
Massachusetts Institute of Technology \\
\texttt{msawhney@mit.edu}
\and
Aaron Smith\\
University of Ottawa \\
\texttt{asmi28@uottawa.ca}
}

\date{}
\begin{document}
\title{Fast and memory-optimal dimension reduction using Kac's walk}
\begin{titlepage}
\clearpage\maketitle
\thispagestyle{empty}
\begin{abstract}
In this work, we analyze dimension reduction algorithms based on the Kac walk and discrete variants.
\begin{itemize}
    \item For $n$ points in $\mb{R}^{d}$, we design an optimal Johnson-Lindenstrauss (JL) transform based on the Kac walk which can be applied to any vector in time $O(d\log{d})$ for essentially the same restriction on $n$ as in the best-known transforms due to Ailon and Liberty [SODA, 2008], and Bamberger and Krahmer [arXiv, 2017]. 
    Our algorithm is memory-optimal, and outperforms existing algorithms in regimes when $n$ is sufficiently large and the distortion parameter is sufficiently small. In particular, this confirms a conjecture of Ailon and Chazelle [STOC, 2006] in a stronger form.
    \item The same construction gives a simple transform with optimal Restricted Isometry Property (RIP) which can be applied in time $O(d\log{d})$ for essentially the same range of sparsity as in the best-known such transform due to Ailon and Rauhut [Discrete Comput.\ Geom., 2014].
    \item We show that by fixing the angle in the Kac walk to be $\pi/4$ throughout, one obtains optimal JL and RIP transforms with almost the same running time, thereby confirming -- up to a $\log\log{d}$ factor -- a conjecture of Avron, Maymounkov, and Toledo [SIAM J.\ Sci.\ Comput., 2010]. Our moment-based analysis of this modification of the Kac walk may also be of independent interest.
\end{itemize}

\end{abstract}
\end{titlepage}
\newpage

\section{Introduction}\label{sec:introduction}
The aim of this paper is to design fast and simple dimensionality reduction algorithms with optimal embedding dimension -- specifically, fast Johnson-Lindenstrauss (JL) transforms and fast Restricted Isometry Property (RIP) transforms -- using the Kac walk and some of its discrete variants.

\subsection{Fast Johnson-Lindenstrauss Transforms (FJLTs)}
The classical lemma of Johnson and Lindenstrauss \cite{JL84} asserts that for any collection of $n$ points $x_1,\dots,x_{n}$ in Euclidean space $\mb{R}^{d}$, and for any error parameter $\epsilon \in (0,1)$, there exists a linear transformation $\Phi : \mb{R}^{d}\to \mb{R}^{k}$, with $k = O(\epsilon^{-2}\log{n})$, such that for all $i\in [n]$, $\|\Phi x_{i}\|_{2} = (1\pm \epsilon)\cdot \|x_{i}\|_{2}$. At least for $ \epsilon > d^{-0.49}$, the bound on $k$ is known to be optimal up to constants (\cite{LN17, AK17}).

The early examples of optimal JL embeddings (i.e. JL embeddings with asymptotically optimal embedding dimension) are (suitably rescaled) random Gaussian \cite{dasgupta2003elementary} and random Rademacher matrices \cite{Ach03}. While achieving the optimal embedding dimension for essentially all settings of the parameters $d,n,\epsilon$, such embeddings are unfortunately too slow for many applications, since the time to compute the image $\Phi v$ of a fixed vector $v \in \mb{R}^{d}$ is in general $O(dk) = O(\epsilon^{-2}d\log{n})$. 

To address this issue, optimal JL embeddings for which the image $\Phi v$ of a fixed vector $v\in \mb{R}^{d}$ can be computed in time $O(d\log{d})$ (we will often refer to this as the \emph{running time}), under some restrictions on $n$ and $\epsilon$, have been proposed, starting with the seminal work of Ailon and Chazelle \cite{AC09}, who constructed a family of optimal JL embeddings with running time
\begin{align}
\label{eq:ac-time}
    O\left(d\log{d} + \min\{\epsilon^{-2}d\log{n}, \epsilon^{-2}\log^{3}{n}\}\right).
\end{align}
In particular, for the $d\log{d}$ term to dominate the second term, we must have 
\begin{align}
\label{eqn:ac}
n \le \exp(\tilde{O}(\epsilon^{2/3}d^{1/3})),
\end{align}
where $\tilde{O}$ hides possible logarithmic factors in $\epsilon$ and $d$. 
At least for fixed $\epsilon$, this restriction was significantly relaxed by Ailon and Liberty \cite{ailon2009fast}, who provided a different family of optimal JL embeddings (for any $\gamma > 0$) with runtime
\begin{align}
\label{eqn:al}
O(d\log({\epsilon^{-2}\log{n}})) \quad \text{for all } n \le \exp(O_{\gamma, \epsilon}(d^{1/2 - \gamma})).
\end{align}
In a recent work of Bamberger and Krahmer \cite{bamberger2017optimal}, an optimal JL embedding, which is simpler than the construction in \cite{ailon2009fast}, is provided with runtime
\begin{align}
\label{eqn:bk}
O(d\log(\epsilon^{-2}\log{n})) \quad \text{for all }  n \le \exp(\tilde{O}(\epsilon^{2}d^{1/2})).
\end{align}
Note that for the regime not covered by \cref{eqn:ac}, the running time of $O(d\log(\epsilon^{-2}\log{n}))$ in \cref{eqn:al,eqn:bk} simplifies to $O(d\log{d})$ as well.  

Finally, we note that there is a separate line of work focused on designing optimal JL embeddings with even faster running times on sparse vectors; since this is not the focus of the present work, we omit further discussion, and refer the reader to \cite{KN14}, noting only that these sparse JL transforms may be used to improve the first term inside the $\min$ in \cref{eq:ac-time} to $\epsilon^{-1}d\log{n}$.

\subsection{The Restricted Isometry Property (RIP) and fast RIP transforms.}The design of JL embeddings with running time $O(d\log{d})$ (albeit with suboptimal embedding dimension) for $n = \exp(\omega_{\epsilon}(\sqrt{d}))$ is based on the connection between JL transforms and transforms satisfying the Restricted Isometry Property (RIP). We recall this important notion, which was first isolated in the compressed sensing literature \cite{candes2006robust, donoho2006compressed}.

\begin{definition}
For a matrix $A$, define
\[\delta_s(A) = \sup_{\substack{\snorm{x}_{2}=1\\x\text{ is }s\text{-sparse}}}|\snorm{Ax}_2^2-1|.\]
We say that $A$ has the \emph{Restricted Isometry Property (RIP) of order $s$ and level $\delta$} if $\delta_s(A)\le\delta$.
\end{definition}

\begin{remark}
It is known (see, e.g., \cite{baraniuk2008simple}) that for any $k\times d$ matrix $A$, $\delta_s(A)\gtrsim\sqrt{(s\log (d/s))/k}$. Hence, we will informally say that a $k\times d$ matrix $A$ is \emph{RIP-optimal at $s$} if
\begin{equation}
\label{eqn:optimal-RIP}
\delta_s(A)\lesssim\sqrt{\frac{s \log (d/s)}{k}}.
\end{equation}
\end{remark}

As in the case of optimal JL transforms, the early constructions of optimal RIP transforms are based on random subgaussian matrices (see, e.g., \cite{baraniuk2008simple}). Once again, these transforms have the drawback of not supporting fast matrix-vector multiplication, leading to the study of fast (nearly) optimal RIP transforms i.e. $k\times d$ matrices supporting matrix-vector multiplication in time $O(d\log{d})$, and which satisfy the RIP property of order $s$ and level $\delta$ with $s,\delta,k$ related (possibly up to polylogarithmic factors in $d$) as in \cref{eqn:optimal-RIP}. 

Notably, improving on previous work of Candes and Tao \cite{CT06}, Rudelson and Vershynin \cite{RM08} showed that a (suitably rescaled) random sample of $k = \Omega(\delta^{-2}s\log^{4}d)$ rows of the Walsh-Hadamard matrix satisfies, with high probability, the RIP of order $s$ and level $\delta$. Since the Walsh-Hadamard matrix supports time $O(d\log{d})$ matrix-vector multiplication via the Fast Walsh-Hadamard Transform, this gives a fast, nearly-optimal (in terms of embedding dimension) RIP transform. The result of Rudelson and Vershynin is optimal up to a factor of $\log^{3}d$, which has since been improved (at least if one is willing to allow a slightly worse dependence on $\delta$) -- see \cite{haviv2017restricted} for an account of these developments.\\

There is a certain sense in which optimal RIP and optimal JL transforms are nearly equivalent. Indeed, an $\epsilon$-net argument shows (see \cite{baraniuk2008simple} for details) that optimal JL embeddings are also optimal RIP embeddings. In particular, this shows that the fast JL embedding in \cref{eqn:al} gives a fast optimal RIP transform for
\begin{align*}
    s \le O_{\gamma,\delta}(d^{1/2-\gamma}).
\end{align*}
In later work of Ailon and Rauhut \cite{AR14}, a simpler fast optimal RIP transform was obtained for
\begin{align}
\label{eqn:ah}
s \le \tilde{O}(\delta d^{1/2}).     
\end{align}
We also note that the optimal JL transform in \cref{eqn:bk} can be used to obtain an even simpler fast optimal RIP transform up to 
\begin{align}
\label{eqn:bk-rip}
    s \le \tilde{O}(\delta^{2}d^{1/2}),
\end{align}
although this connection does not seem to have been observed in \cite{bamberger2017optimal}.\\

In the other direction, a remarkable result of Krahmer and Ward \cite{KW11} (see also \cref{thm:krahmer-ward} below) shows that for any $k\times d$ matrix $A$ with RIP of order $s$ and level $\delta/4$, the random matrix $AD$, where $D$ is a random diagonal Rademacher matrix, satisfies (with high probability) the JL property for a given collection of $n$ points with error $\delta$, provided that $n \le 2^{s}$. This result will prove to be crucial for us.   

\subsection{The Kac walk and Orthogonal Repeated Averaging (ORA)}
Introduced by Mark Kac \cite{kac1956foundations} in 1956 as a toy model for a one-dimensional Boltzmann gas, the Kac walk is the following discrete time Markov chain $\{Q_t\}_{t\ge 0}$ on the special orthogonal group $\on{SO}(d)$.   

\begin{definition}
Let $Q_0 = I_d\in\mb{R}^{d\times d}$. For all integers $t\ge 1$, sample two distinct uniform random coordinates $i_t,j_t\in[d]$ and a uniform random angle $\theta_t$ from $[0,2\pi)$. Then, let $Q_t = R_{i_t,j_t,\theta_t}Q_{t-1}$, where $R_{i,j,\theta}\in\mb{R}^{d\times d}$ is the rotation in the $(i,j)$ plane given by:
\begin{align*}
R_{i,j,\theta}(e_k) &= e_k \quad \text{for all $k \notin \{i,j\}$};\\
R_{i,j,\theta}(x_ie_i+x_je_j) &= (x_i\cos\theta - x_j\sin\theta)e_i + (x_i\sin\theta + x_j\cos\theta)e_j.
\end{align*}
By the \emph{Kac walk of length} $T$ we mean the random variable $Q_T$.
\end{definition}

The Kac walk has a rich history in probability and mathematical physics (see, e.g., the references in \cite{PS17, PS18}). Its utility for dimensionality reduction was first suggested by Ailon and Chazelle \cite{AC09}, who also noted that the Kac walk has the attractive property that given the update sequence $\{R_{i_t, j_t, \theta_t}\}_{t\in [T]}$, the image $Q_{T}v$ of any vector $v \in \mb{R}^{d}$ can be computed with only a \emph{constant} amount of memory overhead. Ailon and Chazelle conjectured that the Kac walk performs at least as well as fast JL transforms based on the fast Walsh-Hadamard transform. Specifically, they conjectured that for a given set of $n$ points in $\mb{R}^{d}$ and error parameter $\epsilon$, projecting $Q_T$ onto the first $O(\epsilon^{-2}\log{n})$ coordinates gives a JL embedding of the point set with relative error $\epsilon$, provided that
$$T = O(d\log{d} + \text{poly}(\log{n},\epsilon^{-1})).$$
Recently, Choromanski, Rowland, Chen, and Weller \cite{choromanski2019unifying} provided numerical support for this conjecture. 

Despite this numerical evidence, the conjecture of Ailon and Chazelle  may perhaps seem quite surprising from the point of view of mixing times of Markov chains. The initial proof of the JL lemma due to Johnson and Lindenstrauss \cite{JL84} is based on taking the embedding matrix to be a uniformly random sample from the $O(\epsilon^{-2}\log{n})$-Stiefel manifold in $d$-dimensions (i.e.\ the uniform distribution over the set of $O(\epsilon^{-2}\log{n})$-orthonormal frames in $\mb{R}^{d}$). On the other hand, dimensional considerations show that the Kac walk \emph{does not} mix on the $k$-Stiefel manifold in $d$-dimensions in $\Omega(kd)$ steps (see \cite[Theorem~6]{Oli09} for a formal proof); in our setting, this would give a lower bound of $\Omega(\epsilon^{-2}d\log{n})$, which asymptotically matches multiplication by a $O(\epsilon^{-2}\log{n})\times d$ random Gaussian matrix. Indeed, Oliveira \cite{Oli09} conjectured that the Kac walk may be used to design JL transforms running in time $O(\epsilon^{-2}(\log{n})d\log{d})$ (this is slower than multiplication by a Gaussian matrix, but only requires a constant memory overhead) based on this connection with mixing on the $O(\epsilon^{-2}\log{n})$-Stiefel manifold in $d$-dimensions.\\ 

Nevertheless, as our first main result (\cref{thm:uniform-main}), we confirm the conjecture of Ailon and Chazelle in a much stronger form by showing that the mixing of the Kac walk on the $1$-Stiefel manifold in $d$-dimensions (i.e.\ the unit sphere) is already enough for the purpose of dimensionality reduction -- while simple in hindsight, we believe that this provides a more intuitive and principled explanation for the existence of JL transforms running in time $O(d\log{d})$ than is obtained from the analysis transforms based on Hadamard matrices. Specifically, we provide a fast and memory-optimal JL transform based on the Kac walk running in time
\begin{align*}
    T = O\left(d\log{d} + \min\{d\log{n}, \epsilon^{-2}\log^{2}{n}\log^{2}(\log{n})\log^{3}{d}\}\right).
\end{align*}
In particular, the first term dominates the running time provided that
\begin{align*}
    n \le \exp(\tilde{O}(\epsilon d^{1/2})),
\end{align*}
which matches the regime covered by \cref{eqn:al,eqn:bk} in terms of $d$ (up to polylogarithmic factors), and has improved dependence on the error parameter $\epsilon$ (in particular, the first term inside the $\min$ is $d\log{n}$, which improves on the previously best known rate of $\epsilon^{-1}d\log{n}$ obtained by sparse JL transforms). As a corollary (\cref{cor:RIP-optimal}), we also obtain a fast RIP transform, much simpler than in \cite{AR14}, up to 
\begin{align*}
    s \le \tilde{O}(\delta d^{1/2}),
\end{align*}
which matches the restriction in \cref{eqn:ah} up to polylogarithmic factors in $d$ and is better in terms of $\delta$ dependence than \cref{eqn:bk-rip}.\\ 

For the purpose of computing matrix-vector products, an even faster and more elegant approach is to fix the angle in the Kac walk for the entire process to be $\theta = \pi/4$. This leads to the following discrete time Markov chain $\{Q_{t}\}_{t\ge 0}$ on $\on{SO}(d)$, which we call orthogonal repeated averaging (ORA) due to its apparent similarity to various iterated averaging processes in the probability literature (see, e.g., \cite{chatterjee2019phase} and the references therein).   

\begin{definition}
Let $Q_0 = I_d\in\mb{R}^{d\times d}$. For all integers $t\ge 1$, sample two distinct uniform random coordinates $i_t,j_t\in[d]$, and let $Q_t = R_{i_t,j_t}Q_{t-1}$, where $R_{i,j}\in\mb{R}^{d\times d}$ is the rotation in the $(i,j)$ plane given by:
\begin{align*}
R_{i,j,\theta}(e_k) &= e_k \quad \text{for all $k \notin \{i,j\}$};\\
R_{i,j,\theta}(x_ie_i+x_je_j) &= \left(\frac{x_i + x_j}{\sqrt{2}}\right)e_i + \left(\frac{x_i - x_j}{\sqrt{2}}\right)e_j.
\end{align*}
By the \emph{ORA of length} $T$ we mean the random variable $Q_T$.
\end{definition}

The application of ORA to dimensionality reduction was suggested by Avron, Maymounkov, and Toledo \cite{AMT10}, who conjectured (based on experimental evidence) that the ORA performs as well as the Kac walk for dimensionality reduction. From the point of view of mixing times, this is even more delicate since, for instance, the total variation distance between the uniform distribution on the sphere and the ORA distribution for \emph{any} finite number of steps is always $1$. In our final main result (\cref{thm:discrete-main}), we almost confirm this conjecture by designing an optimal JL transform based on ORA running in time
\begin{align*}
    T = O\left(d\log{d}\log\log{d} + \log{d}\min\{d\log{n}, \epsilon^{-2}\log^{2}n \log^{2}(\log{n})\log^{3}{d} \}\right).
\end{align*}
As in \cref{cor:RIP-optimal}, this also gives an RIP-optimal transform. 
\begin{remark}
We conjecture that the additional $\log\log{d}$ factor (which is anyway essentially constant for practical purposes) can be removed, and expect the ORA based transform to be more efficient than the Kac walk based transformed in practice. 
\end{remark}

We now proceed to a formal statement of our main results.

\subsection{Main Results}
As mentioned above, our first result is a fast JL transform based on the Kac walk
which essentially matches the fastest known JL transforms based on subsampled Hadamard matrices in all regimes, and improves all known transforms in some regimes.

\begin{theorem}\label{thm:uniform-main}
There is an absolute constant $C_{\ref{thm:uniform-main}} > 0$ for which the following holds. Let $d, n, T, \epsilon > 0$ satisfy $n \ge d \ge C_{\ref{thm:uniform-main}}$, $\epsilon \in (0, C_{\ref{thm:uniform-main}}^{-1})$, and  $\epsilon^{-2}\log n \le d$. 
Then, \cref{alg:uniform} runs in time 
\[T \le C_{\ref{thm:uniform-main}}\left(d\log d + \min\left\{d\log n,\epsilon^{-2}(\log n)^2(\log\log n)^2(\log d)^3\right\}\right),\]
and outputs a linear map $\Psi: \mb{R}^d \to \mb{R}^k$, where $k = k(n,\epsilon) \le C_{\ref{thm:uniform-main}}\epsilon^{-2}\log n$, such that for \emph{any} fixed set $X \subseteq \mb{R}^d$ of size $|X|=n$, the inequalities
\[ (1-\epsilon)\|x\|_{2} \le  \|\Psi x\|_{2} \le (1+\epsilon)\|x\|_{2}\]
hold simultaneously for all $x \in X$ with probability at least $2/3$.

Furthermore, for any $x \in \mb{R}^d$, $\Psi x$ can be computed in time $O(T)$ with only $O(1)$ additional memory.
\end{theorem}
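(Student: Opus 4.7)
The plan is to reduce the construction of a fast JL transform to that of a fast RIP transform via the Krahmer--Ward theorem (\cref{thm:krahmer-ward}), taking as our RIP object a suitably rescaled and subsampled Kac walk. Concretely, letting $T$ and $k$ be as in the statement, $Q_T$ the Kac walk of length $T$, $D$ an independent uniform diagonal Rademacher matrix, and $P_k : \mb{R}^d \to \mb{R}^k$ projection onto the first $k$ coordinates, I would set $\Psi = \sqrt{d/k}\, P_k Q_T D$. The memory-overhead and running-time claims are then immediate from the structure of the Kac walk: the $T$ update triples $(i_t,j_t,\theta_t)$ can be generated on the fly from a constant-size seed, and each Givens rotation $R_{i_t,j_t,\theta_t}$, as well as $D$, can be applied to the current iterate in $O(1)$ time and extra memory.

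By Krahmer--Ward, provided $A := \sqrt{d/k}\, P_k Q_T$ is RIP-optimal at $s = \lceil \log_2 n \rceil$ with level $\epsilon/4$ with probability at least $5/6$, the composition $AD = \Psi$ gives the desired JL guarantee for an arbitrary fixed $n$-point set with probability at least $2/3$; so all the work is in establishing RIP for $A$. For this I would use the standard two-level argument: first, prove a per-vector concentration bound of the shape
\[ \Pr\left[\left|\frac{d}{k}\snorm{P_k Q_T x}_2^2 - 1\right| > \frac{\epsilon}{8}\right] \le \exp(-Cs\log(d/s)) \]
for every fixed unit vector $x \in \mb{R}^d$; then place an $O(1)$-net on the unit sphere of each of the $\binom{d}{s}$ $s$-sparse coordinate subspaces, take a union bound, and transfer $\ell_{2}$ control on the net to an RIP statement on the full sparse sphere via the usual successive-approximation lemma.

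The main obstacle, and the step that controls the length $T$, is the per-vector concentration bound. Since $Q_T$ is orthogonal, $\snorm{Q_T x}_2^2 = 1$, and once the marginal distribution of each coordinate of $Q_T x$ has been equalized, $\mb{E}\snorm{P_k Q_T x}_2^2 \approx k/d$. For the fluctuation bound, I would estimate a large even moment $\mb{E}[(\snorm{P_k Q_T x}_2^2 - k/d)^{2p}]$ and apply Markov; this should exploit the uniformity of each angle $\theta_t$ (which annihilates many of the cross terms in the expansion of the $2p$-th moment) and the exchangeability of coordinates under the walk (which reduces the problem to tracking a few low-dimensional statistics of $Q_t x$). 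This is the moment-method route already pursued in \cite{PS17,PS18} for studying the Kac walk on the sphere; the adaptation here requires choosing $p$ proportional to $s \log(d/s)$ so that the resulting tail exponent survives the union bound.

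Finally, the two candidate summands in the time bound should arise from splitting $T$ into a mixing phase of $O(d \log d)$ rotations -- needed for the coordinates of $Q_t x$ to be roughly equidistributed, so that $\mb{E}\snorm{P_k Q_T x}_2^2 \approx k/d$ in the first place -- followed by a second phase whose length is chosen according to whichever analysis is cheaper. In the first regime, roughly $O(d \log n)$ additional full sweeps should drive the $2p$-th moment below $\exp(-Cs\log(d/s))$ by a direct comparison with Haar-distributed rotations; in the second regime, a finer accounting that tracks only those rotations touching the sparse support of $x$, combined with a chaining or iterated-covering argument over the net, should bring the required count down to $\tilde{O}(\epsilon^{-2}(\log n)^2)$ rotations, at the cost of the extra $(\log\log n)^2(\log d)^3$ factors in the statement. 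Taking the minimum of the two phase-two budgets, and adding back the $d \log d$ mixing phase, yields the claimed time bound.
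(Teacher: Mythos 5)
There is a genuine gap at the heart of your plan: a single-stage reduction ``RIP of order $s\approx\log_2 n$ at level $\epsilon/4$, then \cref{thm:krahmer-ward}'' cannot produce the optimal embedding dimension $k=O(\epsilon^{-2}\log n)$. Any $k\times d$ matrix with $\delta_s\le\epsilon/4$ must satisfy $k\gtrsim \epsilon^{-2}s\log(d/s)$ (this is the lower bound quoted in the paper's remark after the definition of RIP), so with $s\asymp\log n$ you are forced into $k\gtrsim\epsilon^{-2}\log n\cdot\log(d/\log n)$, a $\log d$-type factor above the JL-optimal dimension. The same obstruction shows up concretely in your proof sketch: with only $k=O(\epsilon^{-2}\log n)$ sampled rows, the best per-vector concentration you could hope for (even for an exactly Haar-distributed rotation) is $\exp(-c\epsilon^2k)=n^{-O(1)}$, which cannot survive a union bound over the $\binom{d}{s}e^{O(s)}\approx\exp(\Theta(s\log(d/s)))$ points of your nets on the $s$-sparse sphere. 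So the claimed bound $\Pr[|\frac{d}{k}\snorm{P_kQ_Tx}_2^2-1|>\epsilon/8]\le\exp(-Cs\log(d/s))$ is unobtainable at that value of $k$, and the whole RIP step fails at level $\epsilon/4$ for the final target dimension.

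This is exactly why the paper's \cref{alg:uniform} has two phases. Phase one does what you describe, but only down to a \emph{near}-optimal dimension $K_1=\tilde O(\epsilon^{-2}\log n)$ (with extra $(\log d)^3(\log\log n)^2$ factors): the coupling estimate \cref{lem:contraction-bound} gives entrywise boundedness of $Q_{T_1}$ after $T_1=O(d\log d)$ steps (\cref{lem:uniform-walk-bounds}, part 2), Dirksen's theorem (\cref{thm:dirksen}) then gives RIP for the subsampled walk, and \cref{thm:krahmer-ward} converts this to a JL guarantee; note that even here one needs the chaining-based \cref{thm:dirksen} for randomly subsampled bounded orthogonal matrices, not a per-vector-plus-net argument, to keep the log factors under control. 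Phase two then runs a \emph{fresh} Kac walk in the reduced dimension $K_1$ for $T_2=O(K_1\log n)$ steps and projects to $K_2=O(\epsilon^{-2}\log n)$; since at this stage one only needs the isometry on the $n$ image points (not on all sparse vectors), the per-vector concentration from the coupling (\cref{lem:uniform-walk-bounds}, part 1) union-bounded over $n$ points suffices, and the optimal dimension is reachable. This also explains the $\min\{d\log n,\epsilon^{-2}(\log n)^2(\log\log n)^2(\log d)^3\}$ term in the running time: it is $T_2=O(K_1\log n)$ for the two possible values of $K_1$, not a refined accounting of rotations of a single length-$T$ walk in dimension $d$ touching sparse supports, which is the second place where your outline would not go through as written. (Your memory/runtime observations and the use of the Rademacher preconditioner $D$ before Krahmer--Ward are fine; the paper additionally shows via its symmetry lemmas that $D$ and the random coordinate sampling can be dispensed with, but that is a simplification, not a necessity.)
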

\begin{remark}
The probability of failure can be improved to $1-\eta$ with easy and standard modifications of the proof. For the sake of simplicity, we do not keep track of the dependence on $\eta$. Also, the restriction $\epsilon^{-2}\log n\le d$ is of no significance since when $d\le \epsilon^{-2}\log n$, one may simply take $\psi$ to be the identity map.
\end{remark}

As a direct corollary, we obtain a much simpler construction than in \cite{AR14} of a fast RIP-optimal transformation for $s$ matching the restriction in \cref{eqn:ah}. 

\begin{corollary}\label{cor:RIP-optimal}
Fix $\eta > 0$. Given $\delta\in(0,C_{\ref{cor:RIP-optimal}}^{-1})$, $d$, and $s \le \tilde{O}(\delta d^{1/2})$, let $k = C_{\ref{cor:RIP-optimal}}\epsilon^{-2}s\log d$. Define $\Psi$ as in \cref{alg:uniform}, where $n = d^s(1+2/\delta)^s/s!$. Then with probability at least $2/3$, $\Psi$ is RIP of order $s$ and level $C_{\ref{cor:RIP-optimal}}\delta$ and has image dimension at most $k$. Furthermore, application of $\Psi$ on a given point takes $O(d\log d)$ time.
\end{corollary}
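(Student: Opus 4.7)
The plan is to reduce to \cref{thm:uniform-main} via a standard covering-number argument on the set of $s$-sparse unit vectors in $\mb{R}^{d}$. Concretely, I would first construct a $\delta$-net $\mc{N}$ of $\set{x \in \mb{R}^d : \snorm{x}_2 = 1 \text{ and } x \text{ is } s\text{-sparse}}$ of cardinality at most $n = d^s(1+2/\delta)^s/s!$: there are at most $\binom{d}{s} \le d^s/s!$ choices of support, and the unit sphere of each $s$-dimensional coordinate subspace admits a $\delta$-net of size at most $(1+2/\delta)^s$ by the standard volume argument. This accounts for the choice of $n$ appearing in the statement.

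Next, I would invoke \cref{thm:uniform-main} with point set $X = \mc{N}$ and error parameter $c\delta$ for a sufficiently small absolute constant $c > 0$. With probability at least $2/3$, this produces a linear map $\Psi : \mb{R}^d \to \mb{R}^k$ with $k = O(\delta^{-2}\log n) = \tilde O(\delta^{-2} s \log d)$, computable on any input in time $O(d \log d)$, which satisfies $\snorm{\Psi v}_2 = (1 \pm c\delta)\snorm{v}_2$ simultaneously for every $v \in \mc{N}$. A standard net-extension argument (see, e.g., \cite{baraniuk2008simple}) — expanding an arbitrary $s$-sparse unit vector $x$ as $v + (x-v)$ with a nearest $v \in \mc{N}$ and iterating the induced linear bound on $\snorm{(\Psi^{\top}\Psi - I)x}_2$ restricted to each $s$-dimensional coordinate subspace — then upgrades the approximate isometry on $\mc{N}$ to the uniform estimate $\delta_s(\Psi) \le C\delta$ demanded by the RIP definition.

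The only thing to verify is that the hypotheses of \cref{thm:uniform-main} are satisfied under $s \le \tilde O(\delta d^{1/2})$. Since $\log n = O(s\log(d/\delta))$, the constraint $\epsilon^{-2}\log n \le d$ becomes $s \lesssim \delta^2 d/\log(d/\delta)$, which is implied by $s \le \tilde O(\delta d^{1/2})$ (the implicit requirement $\log d \lesssim \delta d^{1/2}$ just asserts the claimed RIP result is nontrivial). Similarly, $n \le \exp(\tilde O(\delta d^{1/2}))$ in this regime, so the running-time expression of \cref{thm:uniform-main} collapses to its $O(d\log d)$ first term, yielding the desired matrix-vector multiplication complexity. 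There is no essential obstacle beyond \cref{thm:uniform-main} itself; the remainder is classical parameter bookkeeping and a textbook net-to-RIP argument.
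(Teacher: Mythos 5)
Your proposal is correct and follows essentially the same route as the paper: the paper's proof is precisely a sketch invoking \cref{thm:uniform-main} on a $\delta$-net of the $s$-sparse unit vectors of size at most $\binom{d}{s}(1+2/\delta)^s$, with the net-to-RIP upgrade deferred to \cite[Lemma~5.1]{baraniuk2008simple}, exactly as you describe. Your additional parameter bookkeeping (checking $\epsilon^{-2}\log n\le d$ and that the runtime collapses to the $O(d\log d)$ term) simply fills in details the paper leaves implicit.
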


\begin{remark}
Unlike approaches based on the Fast Walsh-Hadamard Transform (e.g. \cite{AC09, AR14, bamberger2017optimal}), \cref{alg:uniform} requires neither any preconditioning by random signed diagonal matrices, nor any postconditioning by random permutation matrices. 
\end{remark}

Finally, we obtain similar results, even after replacing the Kac walk by the simpler ORA. 
\begin{theorem}\label{thm:discrete-main}
There is an absolute constant $C_{\ref{thm:discrete-main}} > 0$ for which the following holds. Let $d, n, T, \epsilon > 0$ satisfy $n \ge d \ge C_{\ref{thm:discrete-main}}$, $\epsilon \in (0, C_{\ref{thm:discrete-main}}^{-1})$, and  $\epsilon^{-2}\log n \le d$. 
Then, \cref{alg:averaging} runs in time 
\[T \le C_{\ref{thm:discrete-main}}\left(d\log d \log\log d+ \min\left\{d\log d \log n,\epsilon^{-2}(\log n)^2(\log\log n)^2(\log d)^4\right\}\right),\]
and outputs a linear map $\Psi: \mb{R}^d \to \mb{R}^k$, where $k = k(n,\epsilon) \le C_{\ref{thm:discrete-main}}\epsilon^{-2}\log n$, such that for \emph{any} fixed set $X \subseteq \mb{R}^d$ of size $|X|=n$, the inequalities
\[ (1-\epsilon)\|x\|_{2} \le  \|\Psi x\|_{2} \le (1+\epsilon)\|x\|_{2}\]
hold simultaneously for all $x \in X$ with probability at least $2/3$.

Furthermore, for any $x \in \mb{R}^d$, $\Psi x$ can be computed in time $O(T)$ with only $O(1)$ additional memory.
\end{theorem}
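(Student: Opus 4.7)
The plan is to mirror the proof of \cref{thm:uniform-main} for the uniform-angle Kac walk, replacing each moment estimate for the continuous process by a corresponding estimate for the fixed-angle chain ORA, and then invoking the Krahmer-Ward reduction from RIP to JL. That is, I would first show that the projection $\Psi = \Pi_k Q_T$ of an ORA matrix onto its first $k$ coordinates is RIP of order $s = \Theta(\log n)$ and level $\delta = \Theta(\epsilon)$ for $k = O(\epsilon^{-2} \log n)$ and a suitable $T$, and then use \cref{thm:krahmer-ward} (Krahmer-Ward) applied to $\Psi D$, where $D$ is a random diagonal Rademacher matrix (which may be absorbed into the first few steps of the chain). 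The RIP step itself would be established via a high-moment bound for $\|\Psi x\|_2^2 - \|x\|_2^2$ uniformly over $s$-sparse unit vectors $x$, followed by an $\epsilon$-net argument on the sparse unit sphere as in \cite{baraniuk2008simple}.

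The heart of the argument is therefore a moment bound: for $q$ growing logarithmically with the target failure probability and for every $s$-sparse unit vector $x$, one needs $\mathbb{E}\bigl[\bigl(\|\Psi x\|_2^2-\tfrac{k}{d}\bigr)^q\bigr]$ to be sufficiently small once $T$ is of the stated size. In the uniform Kac-walk analysis, integrating out the angle $\theta_t$ at each step kills off a natural family of cross-terms and yields a clean recursion on mixed moments of the coordinates of $Q_t x$; with $\theta$ frozen at $\pi/4$, this symmetrization is no longer automatic, and the dyadic averages $(x_i \pm x_j)/\sqrt{2}$ retain a rigid algebraic structure. This is precisely the obstacle flagged in the introduction: in total variation the law of $Q_T$ is singular with respect to Haar measure for every finite $T$.

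The main step I would carry out is thus a dedicated moment analysis for ORA. The approach is to restore the missing symmetry by hand: conjugate the chain occasionally by random signs, or equivalently, track weighted moments $\mathbb{E}\bigl[\prod_i (Q_t x)_i^{2a_i}\bigr]$ indexed by integer vectors $a\in\mathbb{Z}_{\ge 0}^d$ and exploit the fact that after $O(d \log d)$ coordinate-pair samples every pair is hit enough times that random signs emerge effectively from the accumulated averaging. Concretely, one can decompose the chain into blocks of length $\Theta(d\log d)$, show that within each block the joint distribution of the updates dominates (in the appropriate moment sense) that of a block of Kac-walk updates at the cost of a $\log\log d$ factor, and then feed the per-block bound into a martingale-style recursion on the sparse moments. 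The $\log d \cdot \log\log d$ overhead in the running time compared to \cref{thm:uniform-main} should arise directly from this comparison.

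Finally, the running-time bookkeeping follows as in \cref{thm:uniform-main}: each ORA update is $O(1)$ time and $O(1)$ memory, so the total cost is just $T$ multiplied by the per-step cost, and the two terms inside the $\min$ correspond respectively to the threshold at which RIP for $s=\log n$ gives JL for $n$ points via Krahmer-Ward, versus the crude $d\log n$ bound coming from $s=d$. The Krahmer-Ward step and the net argument are standard; I expect the only genuinely new technical work to be the ORA moment bound sketched above.
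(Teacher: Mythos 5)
Your plan has a genuine structural gap: a single-stage ``RIP at order $s=\Theta(\log n)$ plus Krahmer--Ward'' argument cannot produce the optimal embedding dimension $k=O(\epsilon^{-2}\log n)$ claimed in the theorem. By the lower bound recalled in the paper, any $k\times d$ matrix satisfies $\delta_s\gtrsim\sqrt{s\log(d/s)/k}$, so a matrix with $k=O(\epsilon^{-2}\log n)$ rows simply cannot be RIP of order $\Theta(\log n)$ at level $\Theta(\epsilon)$ once $d\gg\log n$; moreover the available RIP bound for subsampled bounded orthogonal systems (\cref{thm:dirksen}) costs additional factors of order $(\log d)^3(\log\log n)^2$. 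This is exactly why \cref{alg:averaging} (like \cref{alg:uniform}) is a two-stage construction: stage one uses \cref{thm:dirksen} plus \cref{thm:krahmer-ward} to embed into a \emph{nearly} optimal dimension $K_1=\min(d,\,C\epsilon^{-2}\log n(\log\log n)^2(\log d)^3)$, and stage two removes the polylogarithmic slack by a direct concentration argument on the $n$ image points -- for ORA this is \cref{lem:big-data}, which combines the moment bounds of \cref{prop:moment-mixing} with Lata{\l}a's inequality and a union bound, and requires running the second walk for $T_2\gtrsim K_1\log n\log d$ steps. Your proposal contains no analogue of this second stage, so the dimension bound $k\le C\epsilon^{-2}\log n$ is out of reach; relatedly, your reading of the $\min$ in the runtime (RIP threshold for $s=\log n$ versus ``crude $s=d$'') is not what happens -- the two terms correspond to the two possible values of the intermediate dimension $K_1$ feeding into $T_2$.

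The other missing piece is the ORA moment bound itself, which you correctly identify as the technical heart but only sketch via a block-comparison heuristic (``within each block the ORA updates dominate a block of Kac-walk updates at the cost of $\log\log d$''). No mechanism is given for such a domination, and it is unclear one exists, precisely because the law of $Q_T$ is singular with respect to Haar measure; this is where a coupling- or comparison-based argument breaks. The paper instead proves a one-step recursion for the weighted moments $S_p(Q_t\mathbf{x})=\frac{1}{(2p)!}\sum_i(Q_t\mathbf{x})[i]^{2p}$ using the FKG inequality to decouple $\mb{E}[x_i^{2a}x_j^{2k-2a}]\le\mb{E}[x_i^{2a}]\mb{E}[x_j^{2k-2a}]$ (\cref{lem:moment-step}), then an induction on $p$ showing these moments reach their equilibrium size after $O(pd\log d)$ steps (\cref{prop:moment-mixing}), and finally a more delicate conditional induction (\cref{prop:max-coord}) that converts the $p=\Theta(\log d)$ moment bound into the $\ell_\infty$ bound $\snorm{Q_t\mathbf{x}}_\infty\lesssim\sqrt{(\log d)/d}$ needed for \cref{thm:dirksen}; the $\log\log d$ factor in $T_1$ arises from that last induction (roughly $d\log d$ per moment level, over $\Theta(\log d)$ levels handled with a $\log p$-type telescoping), not from any comparison with the uniform-angle walk. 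Without a concrete substitute for these estimates, and without the second-stage argument above, the proposal does not yet yield the theorem.
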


\subsection{Techniques}
Our algorithms are very simple, and are best viewed as running in two phases: in the first phase, we achieve an embedding into a nearly optimal dimension, and in the second phase, we correct this nearly optimal dimension to the optimal dimension. Despite the simplicity of the algorithms, the analysis is involved, and makes use of a multitude of techniques from probability and high-dimensional geometry.

The analysis of the first phase -- which is simply either the Kac walk on $\mb{R}^{d}$ (with no preconditioning), or the ORA on $\mb{R}^{d}$ preceded by a single preconditioning step -- boils down to two things. First, we need the near-optimal JL property of randomly subsampled rows of `bounded' orthogonal matrices, which is proved via chaining methods (see \cref{thm:dirksen} and \cref{thm:krahmer-ward}). Second, we need to show that $O(d\log{d})$ steps of the Kac walk or ORA lead to sufficiently bounded orthogonal systems. For the Kac walk, this follows by making use of a contractive coupling, introduced in work of Pillai and Smith \cite{PS17} studying the total variation mixing time of the Kac walk on the sphere. Unfortunately, for the ORA, this coupling breaks down (as noted earlier, after any finite number of steps, the ORA distribution has total variation distance $1$ from the uniform distribution on the sphere, which explains some of the difficulty in devising coupling-based arguments); we get around this by a completely different argument based on combining the FKG correlation inequality with a delicate recursive computation of various (weighted) moments of the ORA distribution. Furthermore, we demonstrate various symmetry properties of the Kac walk, which enable us to forego the preconditioning by random signed diagonal matrices and/or postconditioning by random permutation matrices present in previous works in this area \cite{AC09, ailon2009fast, bamberger2017optimal, AR14} -- this involves bringing in tools from nonabelian Fourier analysis, in particular adapting work of Diaconis and Shahshahani \cite{DS81} on the transposition walk on the symmetric group $\mf{S}_d$.

The second phase of our algorithm is either a Kac walk on $\mb{R}^{d'}$ or ORA on $\mb{R}^{d'}$, run for $O(d'\log{n})$ steps, where $d'$ is the intermediate nearly-optimal dimension from the first step. For the Kac walk, the contractive coupling suffices for the analysis, whereas for the ORA, we need an analysis based on combining our moment computations with a theorem of Lata{\l}a on the moments of sums of independent random variables.

We note that a two stage algorithm achieving similar objectives also appears in a recent work of Bamberger and Krahmer \cite{bamberger2017optimal}, although the two stages used in their work are very different from each other; they use randomly subsampled Hadamard matrices in the first stage, and random Gaussian matrices in the second stage, which leads to an error term of $\epsilon^{-2}d\log{n}$. Apart from achieving a better error term of $d\log{n}$, and being much more memory efficient, we additionally show how both of these two seemingly disparate stages can be accomplished by the same process (either Kac walk or ORA).   

Finally, we remark that our analysis of ORA may be of independent interest (for instance, in probability and quantum computing, see \cite{chatterjee2019phase}).

\subsection{Organization}
The rest of this paper is organized as follows. In \cref{sec:uniform}, we present and analyse our Kac-walk based algorithm (one of the proofs, present in \cite{PS17}, is included in \cref{app:proof-contraction} for completeness) and in \cref{sec:ora}, we present and analyse our ORA-based algorithm. \cref{sec:preliminaries} contains some auxiliary results related to removing various preconditioning and postconditioning operations; the proof of one of these results is contained in \cref{app:symmetry}. Finally, \cref{sec:open} contains some open problems and directions for future research.

\section{Preliminaries}
\label{sec:preliminaries}
\subsection{Projection and sampling operators}
Let $\on{Proj}_X$ be the projection operator from $\mb{R}^m$ onto the first $X$ basis vectors; sometimes, $X$ will be a random variable, in which case this is to be understood as first generating $X$, and then projecting onto the first $X$ basis vectors. Let $\on{Proj}_{m,q}$ be a (random) projection from $\mb{R}^m$ to a random subset of basis vectors, where each is kept with probability $q$. Finally, let $\on{Sample}_K$ be a (random) projection to a uniformly random subset of size $K$ of the standard basis vectors. Given a random vector $\xi\in\{\pm 1\}^m$, let $D_\xi$ denote the corresponding diagonal matrix. Note that all of these projections can trivially be computed in $O(m)$ time with $O(1)$ additional space.  

\subsection{Symmetric Kac walks}
We now isolate the notion of symmetric random walks, which do not require any preconditioning by random diagonal Rademacher matrices or postconditioning by random permutation matrices. We begin by noting that both the standard Kac walk as well as ORA are instances of the following more general process.
\begin{definition}\label{def:kac-walk}
For a distribution $q$ on angles $[0,2\pi)$, we define the \emph{$q$-Kac walk} on $\on{SO}(d)$ as follows. Let $Q_0 = I_d\in\mb{R}^{d\times d}$. For all integers $t\ge 1$ sample two distinct uniform random coordinates $i_t,j_t\in[d]$ and a random angle $\theta_t$ from $q$, and let $Q_t = R_{i_t,j_t,\theta_t}Q_{t-1}$, where $R_{i,j,\theta}\in\mb{R}^{d\times d}$ is the rotation in the $(i,j)$ plane given by: 
\begin{align*}
R_{i,j,\theta}(e_k) &= e_k \text{ for } k\notin\{i,j\}\\ 
R_{i,j,\theta}(x_ie_i+x_je_j) &= (x_i\cos\theta - x_j\sin\theta)e_i + (x_i\sin\theta + x_j\cos\theta)e_j.
\end{align*}
By the \emph{q-Kac walk of length} $T$ we mean the random variable $Q_T$.
\end{definition}
Note that the standard Kac walk corresponds to $q$ being the uniform distribution on $[0,2\pi)$, and ORA corresponds to $q$ being the delta distribution concentrated at $\pi/4$. We will refer to the standard Kac walk as simply the Kac walk. 
\begin{definition}
A $q$-Kac walk is said to be symmetric if the distribution $q$ is invariant under the maps $\theta \mapsto -\theta$ and $\theta \mapsto \theta + \pi/2$. 
\end{definition}
Clearly, the Kac walk is symmetric. ORA is not symmetric; however, taking $q$ to be the uniform measure on $\{\pi/4, 3\pi/4, 5\pi/4, 7\pi/4\}$ leads to a symmetric walk, which we call symmetric ORA (S-ORA for short). 

The following two lemmas about symmetric $q$-Kac walks enable us to dispense with various preconditioning/postconditioning operations appearing in the literature. (e.g., in \cite{AC09, AR14, bamberger2017optimal, ailon2009fast}).
\begin{lemma}\label{lem:decomposition}
Consider a uniform vector $\xi\in\{\pm 1\}^d$, conditioned on having product $1$. Then, for any  symmetric $q$-Kac walk,
\[\on{TV}(Q_T,Q_TD_\xi)\le\frac{d\exp\big(-\frac{T}{d-1}\big)}{1-d\exp\big(-\frac{T}{d-1}\big)},\]
where $\on{TV}$ denotes the total variation distance.
\end{lemma}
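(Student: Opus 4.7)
The plan is to exploit both symmetries of $q$ (invariance under $\theta \mapsto -\theta$ and $\theta \mapsto \theta+\pi/2$) in order to ``absorb'' an even sign flip into the walk itself. First, I would observe that $R_{i,j,\pi}$ equals the diagonal matrix $D_{\{i,j\}}$ with $-1$'s at coordinates $i,j$ and $+1$'s elsewhere, so by the $\theta\mapsto\theta+\pi$ invariance (implied by $\theta\mapsto\theta+\pi/2$) one may introduce i.i.d.\ uniform bits $\epsilon_1,\dots,\epsilon_T\in\{0,1\}$, independent of the walk, and still get
\[\tilde Q_T := \prod_{t=T}^{1} R_{i_t,j_t,\theta_t+\epsilon_t\pi} = \prod_{t=T}^{1} R_{i_t,j_t,\theta_t}\,D_{\{i_t,j_t\}}^{\epsilon_t}\stackrel{d}{=} Q_T.\]

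Next, I would push each diagonal factor to the right end of the product using the elementary $2\times2$ identity $D\,R_{i',j',\vartheta}=R_{i',j',\pm\vartheta}\,D$, valid for any angle $\vartheta$, with the sign $-$ occurring iff $D$ flips exactly one of $\{i',j'\}$. Because the angle distribution is symmetric under $\vartheta\mapsto-\vartheta$ and diagonal $\pm1$ matrices commute among themselves, all pushes together preserve the joint distribution and yield
\[\tilde Q_T \stackrel{d}{=} Q_T^{\star}\cdot D_{\zeta},\qquad D_\zeta := \prod_{t=1}^T D_{\{i_t,j_t\}}^{\epsilon_t},\]
where $Q_T^{\star}$ has the same conditional law as $Q_T$ given the pair sequence $\{(i_t,j_t)\}_{t\le T}$ and is conditionally independent of $\zeta$.

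Viewing $\zeta$ as an element of $\mathbb{F}_2^d$, it is the image of a uniform $\epsilon\in\{0,1\}^T$ under the linear map $\epsilon\mapsto\sum_t\epsilon_t(e_{i_t}+e_{j_t})$, so conditional on the pair sequence it is uniform on $V:=\operatorname{span}_{\mathbb{F}_2}\{e_{i_t}+e_{j_t}:t\le T\}$. Always $V\subseteq V_0$, the even-weight subspace (which is exactly the support of $\xi$), and a standard fact about the cut space of a graph gives $V=V_0$ iff the multigraph $G$ on $[d]$ with edges $\{\{i_t,j_t\}\}_{t\le T}$ is connected. Hence on the event $A:=\{G\text{ is connected}\}$, the conditional law of $Q_T$ is invariant under right multiplication by $D_\sigma$ for every $\sigma\in V_0$, so $\mathrm{dist}(Q_T\mid\text{pairs},A)=\mathrm{dist}(Q_TD_\xi\mid\text{pairs},A)$; integrating out,
\[\on{TV}(Q_T,Q_TD_\xi)\le\Pr[A^c]=\Pr[G\text{ is disconnected}].\]

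Finally, a union bound completes the proof: any disconnected $G$ has a component of size $k\le\lfloor d/2\rfloor$, so
\[\Pr[A^c]\le\sum_{k=1}^{\lfloor d/2\rfloor}\binom{d}{k}\!\left(1-\frac{2k(d-k)}{d(d-1)}\right)^T\le\sum_{k\ge1}\frac{(de^{-T/(d-1)})^k}{k!}\le\frac{de^{-T/(d-1)}}{1-de^{-T/(d-1)}},\]
using $\binom{d}{k}\le d^k/k!$, the estimate $1-2k(d-k)/(d(d-1))\le e^{-k/(d-1)}$ for $k\le d/2$, and $e^x-1\le x/(1-x)$. The main technical obstacle is the push-through step: I must verify not merely that the marginal distributions of the modified angles are preserved, but that $Q_T^{\star}$ ends up conditionally independent of $\zeta$ given the pair sequence. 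After that, the conclusion is a standard random-graph connectivity estimate.
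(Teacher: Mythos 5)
Your proposal is correct and takes essentially the same route as the paper's proof: inject the sign flips $D_{\{i_t,j_t\}}$ via the $\theta\mapsto\theta+\pi$ invariance, commute them to the end of the product using the $\theta\mapsto-\theta$ invariance (with conditional independence given the pair sequence), and reduce the total variation distance to the probability that the random pair graph is disconnected. The only differences are expository: you make explicit, via the $\mathbb{F}_2$ cut-space argument, why the accumulated sign vector is uniform on the even-weight vectors once the graph is connected, and you derive the bound $d e^{-T/(d-1)}/(1-d e^{-T/(d-1)})$ by a self-contained union bound, where the paper asserts the former and cites standard random-graph connectivity results for the latter.
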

\begin{proof}
For every pair of distinct indices $i,j\in[d]$, let $D_{i,j}$ be the random diagonal matrix with all $1$s, except in the positions $i,j$, where the entries are either both $1$ or both $-1$ with equal probability. For every time $t\in[T]$, let $D_t$ be a random diagonal matrix distributed as $D_{i_t,j_t}$, all sampled independently from everything except $(i_t,j_t)$.

First, note that $R_{i_t,j_t,\theta_t}$ and $R_{i_t,j_t,\theta_t}D_t$ have the same distribution since our distribution $q$ on angles is invariant under $\theta\leftrightarrow\theta+\pi$. Second, note that the distributions of
\[D_{i',j'}R_{i,j,\theta}\text{ and }R_{i,j,\theta}D_{i',j'}\]
are the same (these being independent random matrices), using the symmetry $\theta\leftrightarrow -\theta$ and
\[D_{i',j'}R_{i,j,\theta}D_{i',j'}^{-1} = R_{i,j,-\theta}\]
in the case when $|\{i,j\}\cap\{i',j'\}| = 1$.

By applying the first operation to $R_{i_T,j_T,\theta_T}$, and then applying the second operation repeatedly to switch the diagonal matrix $D_T$ to the end, we see that $Q_T$ has the same distribution as $Q_TD_T$. Then we do the same with $R_{i_{T-1},j_{T-1},\theta_{T-1}}$, and so on, and thus we have the same distribution as $Q_TD_{T-1}D_T$, and so on, until
\[Q_TD_1\cdots D_T.\]

Now, let $\mc{E}$ be the event that the graph on vertex set $[d]$ spanned by the edges $(i_t,j_t)$ for $t\in[T]$ is connected. Condition on any instantiation of all the pairs $(i_t,j_t)$ such that $\mc{E}$ holds. We easily see that $D_1\cdots D_T$ and $D_\xi$ have the same distribution in this case, and furthermore that $Q_TD_1\cdots D_T$ and $Q_TD_\xi$ also have the same distribution in this case (since after conditioning on our instantiation, $Q_T$ is independent from $D_t$ for $t\in[T]$ as well as $D_\xi$).

Therefore, conditional on $\mc{E}$, we have that $Q_TD_1\cdots D_T$ and $Q_TD_\xi$ have the same distribution, so that
\[\on{TV}(Q_T,Q_TD_\xi) = \on{TV}(Q_TD_1\cdots D_T,Q_TD_\xi)\le\mb{P}[\mc{E}^c].\]
Finally, we have
\[\mb{P}[\mc{E}^c]\le\frac{d\exp\big(-\frac{T}{d-1}\big)}{1-d\exp\big(-\frac{T}{d-1}\big)};\]
this follows from well known results about the $O((\log d)/d)$ threshold for random graphs to be connected \cite[Chapter~7]{Bol01}.
\end{proof}
\begin{remark}
The true cutoff for connectedness occurs at $T = d\log d/2$ and not $T = d\log d$. However, deriving an exact expression suitable for non-asymptotic analysis is nontrivial, and is anyway not a crucial point in our final analysis. 
\end{remark}
In fact, as the next lemma shows, symmetric $q$-Kac walks enjoy a more non-trivial invariance property. Namely, after $O(d\log{d})$ steps, the distribution is essentially invariant under left-multiplication by signed permutation matrices in $\on{SO}(d)$.
This allows us to simplify our transforms further by simply projecting onto an initial segment of coordinates, thus enabling a more straightforward memory-optimal, in-place implementation.  
\begin{lemma}\label{lem:perm-invar}
Fix $d\ge 10$. Let $\Sigma$ be a uniformly chosen signed permutation matrix in $\on{SO}(d)$ and $D_\xi$ be as in \cref{lem:decomposition}. Then, for any symmetric $q$-Kac walk,
\[\on{TV}(Q_T,\Sigma Q_TD_\xi)\le\frac{2d\exp\big(-\frac{T}{d-1}\big)}{1-d\exp\big(-\frac{T}{d-1}\big)} + C_{\ref{lem:perm-invar}}\bigg(d^{1/2}e^{-T/(6d)} + (d!)^{1/2}\bigg(\frac{\sqrt{5}-1}{2}\bigg)^{T/2}\bigg),\]
where $C_{\ref{lem:perm-invar}} > 0$ is an absolute constant.
\end{lemma}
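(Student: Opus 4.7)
The plan is to exploit both symmetries of $q$ (invariance under $\theta\mapsto -\theta$ and $\theta\mapsto\theta+\pi/2$) to extract from $Q_T$ an internal random walk $\Sigma_T$ on the subgroup $H \subseteq \on{SO}(d)$ of signed permutation matrices, and then to bound the mixing of $\Sigma_T$ on $H$ by a Diaconis--Shahshahani-style Fourier analysis. First, using the $\theta\mapsto\theta+\pi/2$ invariance, I resample each angle as $\theta_t = \theta_t' + \phi_t$, with $\phi_t$ uniform on $\{0,\pi/2,\pi,3\pi/2\}$ and $\theta_t'\sim q$ independent; each step then factors as $R_{i_t,j_t,\theta_t} = R_{i_t,j_t,\theta_t'}\sigma_t$ with $\sigma_t := R_{i_t,j_t,\phi_t}\in H$. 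Using the conjugation identity $\sigma R_{i,j,\theta}\sigma^{-1} = R_{\sigma(i),\sigma(j),\pm\theta}$ (with the sign determined by how $\sigma$ flips signs on $\{i,j\}$), I commute all $\sigma_t$'s past the residual rotation factors to the leftmost position; the sign reversals introduced on the residual angles are absorbed by the $\theta\mapsto -\theta$ symmetry of $q$. This yields a coupled identity in law
\[Q_T \stackrel{d}{=} \Sigma_T \tilde Q_T,\]
where $\Sigma_T := \sigma_T\cdots\sigma_1$ is a random walk on $H$ and $\tilde Q_T$ is distributed as a fresh symmetric $q$-Kac walk (conditionally on the shared coordinate sequence $(i_t,j_t)_t$, the two factors are independent).

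Next, because the external uniform $\Sigma\in H$ is independent of the whole walk, $\Sigma\Sigma_T$ is uniform on $H$ and independent of $(\tilde Q_T, D_\xi)$; writing it $U_H$, we obtain $\Sigma Q_T D_\xi \stackrel{d}{=} U_H\tilde Q_T D_\xi$. A triangle inequality argument, in which \cref{lem:decomposition} is invoked twice (once to replace $\tilde Q_T D_\xi$ by $\tilde Q_T$, and once more to absorb the residual dependence between $\Sigma_T$ and $\tilde Q_T$ through the same connected-random-graph event appearing in the proof of that lemma), produces the first summand $\tfrac{2d\exp(-T/(d-1))}{1-d\exp(-T/(d-1))}$ and reduces the problem to bounding $\on{TV}(\Sigma_T, U_H)$. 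To do so, I adapt the Diaconis--Shahshahani analysis. The step distribution $\mu$ of $\Sigma_T$ picks a uniform distinct pair $(i,j)$ and then a uniform element of $\{I, R_{i,j,\pi/2}, R_{i,j,\pi}, R_{i,j,3\pi/2}\}$; $H$ has order $d!\cdot 2^{d-1}$ with irreducibles indexed by bipartitions of $d$, and the Diaconis upper bound lemma yields
\[4\on{TV}(\mu^{*T}, U_H)^2 \le \sum_{\rho\neq\mathrm{triv}} d_\rho \on{tr}\bigl((\hat\mu(\rho)^*\hat\mu(\rho))^T\bigr).\]
The standard permutation representation (dimension $d-1$) has largest non-trivial eigenvalue $\approx 1 - 1/(3d)$ and contributes the term $d^{1/2}e^{-T/(6d)}$; all other non-trivial irreducibles are shown to satisfy a uniform operator-norm bound $\|\hat\mu(\rho)\|_{\mathrm{op}} \le (\sqrt 5 - 1)/2$, which combined with Plancherel's $\sum_\rho d_\rho^2 = |H|$ yields the bulk contribution $(d!)^{1/2}((\sqrt 5 - 1)/2)^{T/2}$.

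The main technical obstacle will be this uniform operator-norm estimate $\|\hat\mu(\rho)\|_{\mathrm{op}}\le(\sqrt 5 - 1)/2$ across the bulk of irreducibles of $H$. Since $(\sqrt 5 - 1)/2$ is the positive root of $x^2 + x - 1 = 0$, I expect it to arise from a Fibonacci-type recursion on character values at signed transpositions, obtained by combining the branching rules for the irreducibles of $H$ (viewed as a semidirect-product extension of $\mathfrak{S}_d$) with the classical character ratio estimates underlying the Diaconis--Shahshahani analysis of random transpositions on $\mathfrak{S}_d$. A secondary subtlety is the joint dependence between $\Sigma_T$ and $\tilde Q_T$ through the shared coordinate sequence $(i_t,j_t)_t$, which is the source of the extra factor of $2$ on the first summand (rather than $1$) in the stated bound.
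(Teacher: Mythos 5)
Your overall skeleton (use the $\theta\mapsto\theta+\pi/2$ symmetry to factor each step as a residual rotation times a $\pi/2$-rotation $\sigma_t$, commute the $\sigma_t$'s to one side, then compare the accumulated signed permutation with a uniform one via a Diaconis--Shahshahani bound) is the right one, but the core technical step you propose does not work as stated. You try to prove mixing of $\Sigma_T$ on the full hyperoctahedral subgroup $H$ and claim that, apart from the $(d-1)$-dimensional standard representation, every nontrivial irreducible of $H$ satisfies a uniform bound $\snorm{\wh{\mu}(\rho)}_{\mr{op}}\le(\sqrt5-1)/2$. This is false: for the defining $d$-dimensional signed permutation representation (which is irreducible for $H$), averaging $\{I,R_{i,j,\pi/2},R_{i,j,\pi},R_{i,j,3\pi/2}\}$ kills the $(i,j)$ block, so $\wh{\mu}(\rho)=(1-2/d)I$ and the operator norm is $1-2/d$; more generally, just as for $\mf{S}_d$, there is a whole family of irreducibles with character ratio close to $1$, so no dichotomy ``one slow representation, golden-ratio bound for the rest'' can hold, and your proposed Fibonacci-recursion mechanism is left entirely unproved. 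The paper avoids this problem: it never analyzes mixing on $H$ at all. It applies \cref{lem:decomposition} twice to insert independent sign matrices $D_\xi,D_{\xi'}$ on the two sides (this, not the shared coordinate sequence, is the source of the factor $2$ in the first term); the independent $D_{\xi'}$ rerandomizes the diagonal-sign part of the extracted signed permutation $D_1\cdots D_T=PD$, so only the induced \emph{permutation} part $P$ needs a mixing estimate. That part is a lazy random-transposition walk on $\mf{S}_d$ (each $\pi/2$-rotation induces the transposition $(i_tj_t)$ with probability $1/2$), and \cref{lem:perm-mixing} bounds it with the classical $\mf{S}_d$ character estimates; the constant $(\sqrt5-1)/2$ arises from an elementary threshold split on character ratios, namely $\big(\tfrac{1+r_\rho}{2}\big)^3\le\tfrac1d+\tfrac{d-1}{d}r_\rho$ when $r_\rho\ge\sqrt5-2$ (so the Diaconis--Shahshahani estimate applies with exponent $T/3$), and $\big|\tfrac{1+r_\rho}{2}\big|\le\tfrac{\sqrt5-1}{2}$ otherwise (paired with $\sum_\rho d_\rho^2=d!$) --- not from any recursion over irreducibles of $H$.

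Two further gaps. First, because $\Sigma_T$ and $\tilde Q_T$ share the coordinate sequence $(i_t,j_t)_t$, what you need is the \emph{expected conditional} total variation $\mb{E}_{\sigma}\,\on{TV}(\mc{L}(\Sigma_T\mid\sigma),\on{Unif})$, whereas your displayed upper bound lemma controls only the annealed walk $\mu^{*T}$; by convexity the annealed distance is the smaller of the two, so your bound does not imply the one you need. The paper's \cref{lem:perm-mixing} is stated and proved in exactly the conditional form (Cauchy--Schwarz first, then Plancherel conditionally on $\sigma$, then averaging over $\sigma$ using $\mb{E}_{\sigma_i}[\wh{f_{\sigma_i}}(\rho)\wh{f_{\sigma_i}}(\rho)^\dagger]=\tfrac{1+r_\rho}{2}I_{d_\rho}$). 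Second, your plan to ``absorb the residual dependence between $\Sigma_T$ and $\tilde Q_T$'' by a second invocation of \cref{lem:decomposition} through its connectivity event is not a valid step: that lemma bounds $\on{TV}(Q_T,Q_TD_\xi)$ and says nothing about decoupling the signed-permutation factor from the residual walk, nor about the conditional law of the sign component of $\Sigma_T$ given its permutation part. If you adopt the paper's device --- rerandomize the signs with the second independent $D_{\xi'}$ and only track the permutation marginal, with the conditional Fourier estimate --- your argument goes through; as written, the $H$-representation-theoretic route has a genuine hole at its central estimate.
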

The proof of this result is presented in \cref{app:symmetry}, and relies on character estimates of Diaconis and Shahshahani \cite{DS81} used to prove a sharp cutoff for the transposition walk on $\mf{S}_d$.

\section{Fast JL-Optimal and RIP-Optimal Transforms Using the Kac Walk: Proof of \texorpdfstring{\cref{thm:uniform-main}}{Theorem 1.4} and \texorpdfstring{\cref{cor:RIP-optimal}}{Corollary 1.5}}
\label{sec:uniform}

The proof of \cref{thm:uniform-main} and \cref{cor:RIP-optimal} uses \cref{alg:uniform}.
\begin{algorithm}
\begin{algorithmic}
\STATE{\#Run the uniform Kac walk for $O(d \log d)$ steps}
\STATE{Take $T_1 = 12d\log d$ and $K_1 = \min(d,C_{\ref{alg:uniform}}\epsilon^{-2}\log n(\log\log n)^2(\log d)^3)$. Sample $Q_{T_1}$ from the uniform Kac walk and let
\[\Psi_1 := \sqrt{\frac{d}{K_1}}\cdot\on{Proj}_{\on{Binom}(d,K_1/d)}~\circ~Q_{T_1}.\]}
\STATE{Take $T_2\ge 12K_1\log n$ and $K_2\ge C_{\ref{alg:uniform}}\epsilon^{-2}\log n$. Sample $Q_{T_2}'$ from the uniform Kac walk and let
\[\Psi_2 := \sqrt{\frac{K_1}{K_2}}\cdot\on{Proj}_{\on{Binom}(K_1,K_2/K_1)}~\circ~ Q_{T_2}'.\]}
\STATE{Return \[\Psi = \Psi_2\circ \Psi_1.\]}
\end{algorithmic}
\caption{Fast JL via the Uniform Kac walk}
\label{alg:uniform}
\end{algorithm}

\subsection{Coupling and contraction estimates for the Kac walk on \texorpdfstring{$\mb{S}^{d-1}$}{Sd-1}}
\label{sec:contraction}
In this subsection, we describe a coupling of two copies of the Kac walk $X_t, Y_t$  so that the distance between them goes to zero exponentially quickly -- this is one of the two key steps in our analysis of \cref{alg:uniform}. To begin, note that the Kac walk may be viewed as a discrete-time Markov chain $\{X_{t}\}_{t\ge 0}$ on $\mb{S}^{d-1}$ defined as follows: at every step $t$, choose two coordinates $1\le i_t < j_t \le d$ and an angle $\theta_t \in [0,2\pi)$ uniformly at random, and set
\begin{align}
\label{eqn:kac-walk}
X_{t+1}[i_t] &= \cos(\theta_{t})X_t[i_t] - \sin(\theta_t)X_t[j_t] \nonumber \\
X_{t+1}[j_t] &=  \sin(\theta_t)X_t[i_t] + \cos(\theta_t)X_t[j_t] \nonumber \\
X_{t+1}[k] &= X_{t}[k]\quad k\notin \{i_t,j_t\}.
\end{align}
Let $F: [d] \times [d] \times [0,2\pi) \times \mb{S}^{d-1} \to \mb{S}^{d-1}$ be the map associated with this representation, so that $X_{t+1} = F(i_t, j_t, \theta_t, X_t)$.  

\begin{definition}[Proportional coupling, see Definition 3.1 in \cite{PS17}]
\label{def:proportional-coupling}
Define a coupling of two copies $\{X_t\}_{t\ge 0}, \{Y_t\}_{t\ge 0}$ of Kac's walk as follows. Fix $X_0, Y_0 \in \mb{S}^{d-1}$. Let $(i_0, j_0, \theta_0)$ be the update variables used by $X_1$ in \cref{eqn:kac-walk}. Choose $\varphi \in [0,2\pi)$ uniformly at random among all angles that satisfy
\begin{align*}
    X_1[i_0] &= \sqrt{X_0[i_0]^{2} + X_0[j_0]^{2}}\cos\varphi,\\
    X_i[j_0] &= \sqrt{X_0[i_0]^{2} + X_0[j_0]^{2}}\sin\varphi.
\end{align*}
As noted in \cite{PS17}, if $X_0[i_0] = X_0[j_0] = 0$, then all angles $\varphi$ satisfy this equation; otherwise, there is a unique such $\varphi$, and the value of $\varphi - \theta_0 \mod 2\pi$ does not depend on $\theta_0$. 

Then, choose $\theta_0' \in [0,2\pi)$ uniformly among the angles that satisfy
\begin{align*}
    F(i_0,j_0,\theta'_0,Y_0)][i_0] &= \sqrt{Y_0[i_0]^{2} + Y_0[j_0]^{2}}\cos\varphi,\\
    F(i_0,j_0,\theta'_0, Y_0)[j_0] &= \sqrt{Y_0[i_0]^{2} + Y_0[j_0]^{2}}\sin\varphi,
\end{align*}
and set $Y_1 = F(i_0,j_0,\theta'_0, Y_0)$. Note that this coupling forces $Y_1$ to be as close as possible to $X_1$ in the Euclidean distance (for instance, in two dimensions, we always have $X_1 = Y_1$ under this coupling, and in more than two dimensions, it still forces the points $(0,0), (X_1[i_0], X_1[j_0]), (Y_1[i_0], Y_1[j_0])$ to be collinear).  

Now, continue this process starting from $(X_1,Y_1)$ instead of $(X_0,Y_0)$.
\end{definition}

The following key lemma shows that, under the coupling described above, the distance (interpreted suitably) between two copies of Kac's walk decreases exponentially fast.  
\begin{lemma}[See Lemma 3.3 in \cite{PS17}]
\label{lem:contraction-bound}
Fix $X_0, Y_0 \in \mb{S}^{d-1}$. For $t\ge 0$, couple $(X_{t+1}, Y_{t+1})$ conditional on $(X_t, Y_t)$ according to the coupling in \cref{def:proportional-coupling}. Then, for any $t \ge 0$, Kac's walk on $\mb{S}^{d-1}$ satisfies
\[\mb{E} \bigg[\sum_{i=1}^{d}(X_t[i]^{2} - Y_t[i]^{2})^{2}\bigg] \le 2\bigg(1-\frac{1}{2d}\bigg)^{t}\le 2e^{-t/(2d)}.\]
\end{lemma}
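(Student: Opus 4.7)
The plan is to track the Lyapunov functional
\[\Phi_t := \sum_{i=1}^d (X_t[i]^2 - Y_t[i]^2)^2,\]
and prove the one-step contraction $\mb{E}[\Phi_{t+1} \mid X_t, Y_t] \le (1 - 1/(2d))\,\Phi_t$. Writing $a_i := X_t[i]^2$, $b_i := Y_t[i]^2$, and $\delta_i := a_i - b_i$, two preliminary observations will drive the argument: (i) $\Phi_0 \le 2$, because $a_i, b_i \in [0,1]$ with $\sum_i a_i = \sum_i b_i = 1$ gives $\sum_i \delta_i^2 \le \sum_i(a_i^2 + b_i^2) \le \sum_i(a_i + b_i) = 2$; and (ii) the crucial identity $\sum_i \delta_i = 0$, a direct consequence of $X_t, Y_t$ both lying on $\mb{S}^{d-1}$.

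\textbf{One-step computation.} Conditioning on the chosen pair $(i,j) = (i_t, j_t)$, the key feature of the proportional coupling (Definition 2.4) is that $(X_{t+1}[i], X_{t+1}[j])$ and $(Y_{t+1}[i], Y_{t+1}[j])$ are parametrized by the \emph{same} angle $\varphi$, which is uniform on $[0, 2\pi)$: explicitly
\[X_{t+1}[i]^2 = (a_i + a_j)\cos^2\varphi, \qquad Y_{t+1}[i]^2 = (b_i + b_j)\cos^2\varphi,\]
and analogously for the $j$-coordinate with $\sin^2\varphi$. Writing $\Delta_{ij} := \delta_i + \delta_j$, the contribution of coordinates $i, j$ to $\Phi_{t+1}$ is $\Delta_{ij}^2 (\cos^4\varphi + \sin^4\varphi)$, while all other coordinates are unchanged. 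Since $\cos^4\varphi + \sin^4\varphi = 1 - \tfrac12 \sin^2(2\varphi)$ integrates to $\tfrac34$ over a uniform $\varphi$,
\[\mb{E}[\Phi_{t+1} \mid X_t, Y_t, i, j] = \Phi_t - \delta_i^2 - \delta_j^2 + \tfrac34(\delta_i + \delta_j)^2.\]

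\textbf{Averaging and iteration.} Averaging next over the uniformly random unordered pair $\{i,j\}$, the relevant sums are $\sum_{i<j}(\delta_i^2 + \delta_j^2) = (d-1)\Phi_t$ and $\sum_{i<j}(\delta_i+\delta_j)^2 = (d-2)\Phi_t + (\sum_i \delta_i)^2 = (d-2)\Phi_t$, where the second equality applies (ii). Plugging in and simplifying,
\[\mb{E}[\Phi_{t+1} \mid X_t, Y_t] = \Phi_t\bigg(1 - \frac{(d-1) - \tfrac34(d-2)}{\binom{d}{2}}\bigg) = \Phi_t\bigg(1 - \frac{d+2}{2d(d-1)}\bigg) \le \Big(1 - \tfrac{1}{2d}\Big)\Phi_t.\]
Iterating from the base $\Phi_0 \le 2$ then yields $\mb{E}[\Phi_t] \le 2\,(1 - 1/(2d))^t \le 2 e^{-t/(2d)}$, as desired.

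\textbf{Main obstacle.} The only genuinely nontrivial step is step two: recognizing that the proportional coupling forces both copies to depend on a \emph{single} common uniform angle $\varphi$, which collapses a two-dimensional integral to a one-parameter one and produces the clean factor $\tfrac34$. Everything else is routine algebra, but the sphere identity $\sum_i \delta_i = 0$ is essential -- without it, the cross-term $(\sum_i \delta_i)^2$ in the pair sum would be of the same order as the diagonal contribution and would actually destroy contraction rather than sharpen it.
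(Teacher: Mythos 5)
Your proposal is correct and follows essentially the same argument as the paper's proof in Appendix A: condition on the chosen pair, exploit that the proportional coupling forces both walks to share the common uniform angle $\varphi$ (giving the factor $\mb{E}[\cos^4\varphi+\sin^4\varphi]=3/4$), use $\sum_i\delta_i=0$ to kill the cross term, and iterate the resulting one-step contraction (your factor $1-\tfrac{d+2}{2d(d-1)}$ is exactly the paper's $1-\tfrac{1}{2d}-\tfrac{3}{2d(d-1)}$) from the bound $\Phi_0\le 2$.
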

For the reader's convenience, we include the complete (short) proof of this lemma in \cref{app:proof-contraction}.
Given this contractive coupling, we now derive estimates regarding the boundedness of the coordinates of the Kac walk.
\begin{lemma}\label{lem:uniform-walk-bounds}
Fix $X_0\in \mb{S}^{d-1}$. Then, for any $\epsilon\in(1/d,1/2)$, any $K\ge 2$, any $k\in[d]$, and any $t\ge 0$, the (uniform) Kac walk satisfies the following, denoting $X_t = Q_tX_0$.
\begin{enumerate}[1.]
    \item $\mb{P}\bigg[\sum_{i=1}^kX_t[i]^2\not\in\frac{k}{d}[1-\epsilon,1+\epsilon]\bigg]\le 8d^4\exp(-t/(2d))+2\exp(-\epsilon^2k/64)$;
    \item $\mb{P}\bigg[\max_{i,j\in[d]}|Q_t[i,j]|\ge K\sqrt{\frac{\log d}{d}}\bigg]\le 2d^3\exp(-t/(2d)) + 2d^{5/2}\exp(-K^2(\log d)/2)$.
\end{enumerate}
\end{lemma}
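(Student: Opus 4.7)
The overall strategy is to transport both estimates from the uniform distribution on $\mb{S}^{d-1}$ to the Kac walk via the contractive coupling of \cref{lem:contraction-bound}. Concretely, I would sample $Y_0$ from the uniform distribution on $\mb{S}^{d-1}$ (independently of everything) and run $Y_t$ paired to $X_t$ under the proportional coupling of \cref{def:proportional-coupling}. By rotational invariance of the Kac step, $Y_t$ is uniform on $\mb{S}^{d-1}$ for every $t$, while \cref{lem:contraction-bound}, averaged over $Y_0$, yields
\[\mb{E}\sqb{\sum_{i=1}^d (X_t[i]^2 - Y_t[i]^2)^2} \le 2e^{-t/(2d)}.\]
Since $Q_t[i,j]$ is the $i$-th coordinate of the Kac walk started from $e_j$, the same strategy will apply column-by-column for the second part.

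For part 1, I would split
\[\sum_{i=1}^k X_t[i]^2 - \frac{k}{d} = \sum_{i=1}^k \paren{X_t[i]^2 - Y_t[i]^2} + \paren{\sum_{i=1}^k Y_t[i]^2 - \frac{k}{d}}\]
and bound each piece by $\epsilon k/(2d)$. Cauchy--Schwarz bounds the first summand by $\sqrt{k\sum_i (X_t[i]^2 - Y_t[i]^2)^2}$, so Markov's inequality together with the display above gives a tail of order $d^2 e^{-t/(2d)}/(\epsilon^2 k)$; under $\epsilon > 1/d$ and $k \ge 2$, this is dominated by the stated $8d^4 e^{-t/(2d)}$. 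For the sphere fluctuation, use the representation $\sum_{i=1}^k Y_t[i]^2 = A/(A+B)$ with $A\sim\chi^2_k$ and $B\sim\chi^2_{d-k}$ independent. Applying the Laurent--Massart chi-squared tail estimates to $A$ and $A+B$ separately, a relative error of $\epsilon/2$ in the ratio follows from relative errors of order $\epsilon/4$ in the numerator and denominator, each having failure probability $\lesssim e^{-c\epsilon^2 k}$ (using $d \ge k$ to absorb the $\chi^2_d$ tail into the $\chi^2_k$ tail), and the constants can be arranged to produce the factor $2e^{-\epsilon^2 k/64}$.

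For part 2, let $X^{(j)}_t := Q_t e_j$ be the Kac walk started from $e_j$, and couple each one to an independent uniform sphere process $Y^{(j)}_t$ by the same recipe; set $\Delta_j := (\sum_i (X^{(j)}_t[i]^2 - Y^{(j)}_t[i]^2)^2)^{1/2}$. The elementary inequality $X^{(j)}_t[i]^2 \le Y^{(j)}_t[i]^2 + \Delta_j$ shows that $|X^{(j)}_t[i]| \ge K\sqrt{\log d/d}$ forces either (a) $\Delta_j \ge \delta K^2\log d/d$, or (b) $|Y^{(j)}_t[i]| \ge \sqrt{1-\delta}\,K\sqrt{\log d/d}$, for any choice of $\delta \in (0,1)$. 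The parameter $\delta = 1/K^2$ balances the two contributions: Markov on the contractive estimate together with a union bound over the $d$ columns gives a coupling-error contribution $\lesssim d^3 e^{-t/(2d)}/(K^4 \log^2 d)$, which for $K\ge 2$ and $d\ge 3$ simplifies to $2d^3 e^{-t/(2d)}$. The standard sub-Gaussian tail bound $\mb{P}(|Y_1| \ge s) \le 2\exp(-ds^2/2)$ for coordinates of a uniform sphere vector, applied with $s^2 = (K^2-1)\log d/d$ and union bounded over the $d^2$ entries of $Q_t$, produces the sphere contribution $\le 2d^{5/2}\exp(-K^2\log d/2)$.

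The main technical point is calibrating the split parameter $\delta$ in part 2 so that the exponent in the sphere-tail contribution is exactly $-K^2\log d/2$ after union bounding: the choice $\delta = 1/K^2$ is essentially forced by the requirement that the extra polynomial factor $d^{\delta K^2/2}$ not exceed $d^{1/2}$, while still allowing the coupling-error event to absorb into $2d^3 e^{-t/(2d)}$ via the factor $1/(\delta^2 K^4)$. Beyond this bookkeeping, the proof is routine, as all probabilistic ingredients (the contractive coupling, and chi-squared and sphere concentration) are supplied either by \cref{lem:contraction-bound} or by classical high-dimensional probability.
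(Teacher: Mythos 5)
Your proposal is correct and, in its main structure, is the same as the paper's: couple $X_t$ under the proportional coupling of \cref{def:proportional-coupling} to a copy $Y_t$ started uniformly on $\mb{S}^{d-1}$, control the coupling error through \cref{lem:contraction-bound} via Cauchy--Schwarz and Markov, and handle the uniform copy by sphere concentration; in part 2 your choice $\delta=1/K^2$ reproduces exactly the paper's split $K^2=(K^2-1)+1$ (coupling-error threshold $\log d/d$, sphere threshold $\sqrt{(K^2-1)\log d/d}$), followed by the same spherical-cap bound and union bounds over columns/entries. The one genuinely different step is the concentration of $\sum_{i\le k}Y_t[i]^2$ in part 1: you use the Beta-type representation $A/(A+B)$ with independent $\chi^2_k$ and $\chi^2_{d-k}$ variables and Laurent--Massart tails, whereas the paper runs a symmetrization plus Gaussian-comparison argument (replacing $Z[i]^2$ by $(rZ)[i]^2$ via Jensen) and computes the moment generating function of $G^2-G'^2$ explicitly, which is what yields the precise factor $2\exp(-\epsilon^2k/64)$. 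Your route is perfectly sound but, as written, a straightforward application of the chi-squared tails to numerator and denominator produces a bound of the shape $4\exp(-c\epsilon^2 k)$ with $c$ somewhat smaller than $1/64$ (the denominator term only enjoys $d\ge k$ degrees of freedom, so balancing the two exponents degrades the constant); so either accept slightly worse explicit constants (harmless for every downstream use of \cref{lem:uniform-walk-bounds}) or tighten the split, rather than asserting the constants "can be arranged." Two cosmetic points: in part 2 the intermediate expression $d^3e^{-t/(2d)}/(K^4\log^2 d)$ should not retain the $K^4$ once $\delta=1/K^2$ is substituted (the second-moment Markov bound gives $2d^3e^{-t/(2d)}/\log^2 d$, which still yields your final $2d^3e^{-t/(2d)}$), and the restriction $k\ge 2$ in part 1 is unnecessary since $k\ge 1$ together with $\epsilon>1/d$ already gives the $8d^4$ factor; also, the coupled processes $Y^{(j)}_t$ for different columns need not (and cannot easily) be made independent, but this is irrelevant since only a union bound over $j$ is used.
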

\begin{proof}
Let $Y_0$ be a uniformly sampled from the sphere $\mb{S}^{d-1}$ and couple our Kac walk $X_t$ (via the proportional coupling \cref{def:proportional-coupling}) to a Kac walk $Y_t$ starting from $Y_0$. Then, we have
\[\mb{E}\bigg[\bigg|\sum_{i=1}^k(X_t[i]^2-Y_t[i]^2)\bigg|^2\bigg]\le k\mb{E} \bigg[\sum_{i=1}^{d}(X_t[i]^{2} - Y_t[i]^{2})^{2}\bigg] \le 2k\bigg(1-\frac{1}{2d}\bigg)^{t}\le 2ke^{-t/(2d)}\]
by \cref{lem:contraction-bound} and Cauchy--Schwarz. Therefore, Markov's inequality implies that
\begin{equation}\label{eq:coupled-distance}
\mb{P}\bigg[\bigg|\sum_{i=1}^k(X_t[i]^2-Y_t[i]^2)\bigg|\ge\epsilon k/(2d)\bigg]\le 8d^2k^{-1}\epsilon^{-2}e^{-t/(2d)}\le 8d^4e^{-t/(2d)}.
\end{equation}
Given this, it suffices to show that $\sum_{i=1}^kY_t[i]^2$ is well-concentrated, which follows since $Y_t$ is uniformly distributed on $\mb{S}^{d-1}$; we include a short computation demonstrating this well-known fact for completeness. 

Let $c$ be a constant to be specified later. Let $Z$ and $Z'$ be independent uniform random vectors on $\mb{S}^{d-1}$. Then,
\begin{align*}
\mb{P}\bigg[\bigg|\sum_{i=1}^{k}Z[i]^2-\frac{k}{d}\bigg|\ge \frac{\epsilon k}{2d}\bigg]&\le {e^{\frac{-c\epsilon k}{2d}}} \mb{E}\Big[e^{c(\sum_{i=1}^{k}Z[i]^2-\frac{k}{d})} + e^{-c(\sum_{i=1}^{k}Z[i]^2-\frac{k}{d})}\Big]\\
&\le {e^{\frac{-c\epsilon k}{2d}}}\mb{E}\Big[e^{c(\sum_{i=1}^{k}Z[i]^2-Z'[i]^2)} + e^{-c(\sum_{i=1}^{k}Z[i]^2-Z'[i]^2)}\Big]\\
& = 2{e^{\frac{-c\epsilon k}{2d}}}\mb{E}\Big[e^{c(\sum_{i=1}^{k}Z[i]^2-Z'[i]^2)}\Big],\\
\end{align*}
where in the second line we have used Jensen's inequality and in the third line we have used symmetry. Let $r = \sum_{i=1}^{d}G[i]^2$ where $G[i]\sim \mc{N}(0,1/d)$ and let $r'$ be an independent copy of $r$. Using orthogonal invariance of the Gaussian we have that $r\cdot Z$ (i.e.\ we pointwise multiply each coordinate of $Z$ by $r$) is distributed as a Gaussian vector $(G[1],\ldots,G[d])$ with each coordinate distributed as $\mc{N}(0,1/d)$. Using these properties along with $\mb{E}[r] = \mb{E}[r'] = 1$, we see that
\begin{align*}
2e^{-\frac{c\epsilon k}{2d}}\mb{E}\big[e^{c(\sum_{i=1}^{k}Z[i]^2-Z'[i]^2)}\big]&\le 2e^{-\frac{c\epsilon k}{2d}}\mb{E}\big[e^{c(\sum_{i=1}^{k}G[i]^2-G'[i]^2)}\big]\\
&= 2e^{-\frac{c\epsilon k}{2d}}\mb{E}\big[e^{cG^2}\big]^{k}\mb{E}\big[e^{-cG^2}\big]^{k}\\
&= 2e^{-\frac{c\epsilon k}{2d}}(1-4c^2/d^2)^{-k/2},
\end{align*}
where we have use Jensen's inequality to replace $Z[i]^2$ by $(r\cdot Z)[i]^2$ and $Z'[i]^2$ by $(r'\cdot Z')[i]^2$, then independence between coordinates, and then explicit computation (assuming $c < d/2)$. Now let $c = A\epsilon d$, so that ultimately
\[\mb{P}\bigg[\bigg|\sum_{i=1}^{k}Z[i]^2-\frac{k}{d}\bigg|\ge \frac{\epsilon k}{2d}\bigg]\le 2e^{-A\epsilon^2 k/2}(1-4A^2\epsilon^2)^{-k/2}\le 2e^{-A\epsilon^2k/2+4A^2\epsilon^2k}.\]
Finally, letting $A = 1/16$ and union-bounding with \cref{eq:coupled-distance} proves conclusion 1.\ of the lemma.

For the second conclusion, it suffices to prove that
\[\mb{P}\bigg[\max_i|X_t[i]|\ge K\sqrt{\frac{\log d}{d}}\bigg]\le 2d^2\exp(-t/(2d)) + 2d^{3/2}\exp(-K^2(\log d)/2),\]
since then, union bounding over $X_0 = e_1,\ldots,e_d$ immediately gives the desired result.

For this, we that Markov's inequality combined with \cref{lem:contraction-bound} gives
\begin{equation}\label{eq:coupled-distance-2}
\mb{P}\bigg[\max_i|X_t[i]^2-Y_t[i]^2|\ge\frac{\log d}{d}\bigg]\le 2d^2e^{-t/(2d)}.
\end{equation}
Since $Y_t$ is uniformly distributed on the sphere, we have good control over $\max_i|Y_t[i]|^2$. In particular, recall a standard bound on the volume of spherical caps (see e.g., \cite[Lemma~2.2]{Bal97}): for a uniformly random unit vector $Y_t \in \mb{R}^d$ and a basis vector $e_i \in \mb{R}^d$, we have
\begin{equation}\label{eq:cap-volume}
\mb{P}[Y_t[i]\ge\epsilon] = \mb{P}[\abs{Y_t - e_i}^2 \le 2-2\epsilon]\le e^{-d \epsilon^2/2}.
\end{equation}
Similarly, one obtains the same bound for $\mb{P}[Y_{t}[i] \leq -\epsilon]$. Using these two bounds with $\epsilon = \sqrt{(K^2-1)(\log d)/d}$ and taking the union bound over $1\le i\le d$, we see that
\[\mb{P}\bigg[\max_iY_t[i]^2\ge(K^2-1)\frac{\log d}{d}\bigg]\le 2de^{-(K^2-1)(\log d)/2} = 2d^{3/2}e^{-K^2(\log d)/2},\]
which combined with  \cref{eq:coupled-distance-2} gives the desired result.
\end{proof}

\subsection{JL-optimality}
The other key tool in proving \cref{thm:uniform-main} is a (by now) classic result \cite{CT06,RM08,Rau10} that demonstrates the restricted isometry property of orthogonal matrices with `bounded' coordinates. We cite the version due to Dirksen \cite{Dir15}, which provides the best known bounds if one requires the dependence on $\delta$ to be optimal i.e. $\delta^{-2}$.
\begin{theorem}[{\cite[Theorem~4.1]{Dir15}}]\label{thm:dirksen}
Let $U$ be an $N\times N$ orthogonal matrix with $\sup_{i,j\in[N]} \sqrt{N}|U_{i,j}| \le K$. Recall $\delta_s$ is defined as
\[\delta_s(A) = \sup_{\substack{\snorm{x}=1\\s\emph{-sparse}}}|\snorm{Ax}_2^2-1|.\]
Then, $\mb{P}[\delta_s(U_I)\ge\delta]\le\eta$, where $U_I = \sqrt{N/m}\cdot\on{Proj}_{N,q}\circ U$ and $q = m/N$, as long as
\[m\ge C_{\ref{thm:dirksen}} sK^2\delta^{-2}\max((\log s)^2(\log m)(\log N), \log(\eta^{-1})).\]
\end{theorem}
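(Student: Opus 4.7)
The plan is to express $\delta_s(U_I)$ as the supremum of a centered random process over $s$-sparse unit vectors and attack it with a generic chaining argument. Writing $\xi_1,\dots,\xi_N$ for the independent $\on{Bernoulli}(q)$ selectors defining $\on{Proj}_{N,q}$ and $T_s$ for the set of $s$-sparse unit vectors in $\mb{R}^N$, one has
\[
\snorm{U_I x}_2^2 - 1 \;=\; \frac{1}{q}\sum_{i=1}^N (\xi_i - q)(Ux)_i^2 \;=:\; Z_x,
\]
so $\delta_s(U_I) = \sup_{x\in T_s}|Z_x|$. For each fixed $x$, $Z_x$ is a centered sum of bounded independent random variables: the entry-wise hypothesis on $U$ and Cauchy--Schwarz give $\snorm{Ux}_\infty^2 \le K^2 s/N$ (since $Ux$ is a sum of at most $s$ columns), so $\on{Var}(Z_x) \le K^2 s/m$ and each summand is bounded by $K^2 s/m$. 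Bernstein then gives the pointwise tail $\mb{P}[|Z_x|\ge t]\le 2\exp(-c m t^2/(K^2 s(1+t)))$, making it plausible that roughly $m\gtrsim K^2 s\delta^{-2}\log(\eta^{-1})$ suffices for each individual $x$.

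To promote this to a uniform bound over $T_s$, I would first symmetrize (Gin\'e--Zinn) and then invoke Talagrand's generic chaining for processes with mixed sub-Gaussian / sub-exponential increments, yielding
\[
\mb{E}\sup_{x\in T_s}|Z_x| \;\lesssim\; \gamma_2(T_s,d_2) + \gamma_1(T_s,d_\infty),
\]
where $d_2(x,x') = \on{Var}(Z_x-Z_{x'})^{1/2}$ is the sub-Gaussian pseudo-metric on $T_s$ and $d_\infty$ is the sub-exponential one. The remaining task is to estimate these functionals using the combinatorial structure of $T_s$ as a union of $\binom{N}{s}$ unit spheres of dimension $s$, together with the standard volumetric covering bound $\log N(T_s,\snorm{\cdot}_2,\eta)\le s\log(eN/s)+s\log(3/\eta)$. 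A careful dyadic chaining organized scale-by-scale over coordinate supports, combined with a Bernstein-type deviation at each level, then produces the target $m$-estimate after one finally sets $u=\delta$ in the tail bound and solves.

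The main obstacle is obtaining the optimal $\delta^{-2}$ exponent together with the stated polylogarithmic factor $(\log s)^2(\log m)(\log N)$. A naive Dudley integral with the Lipschitz bound $d_2(x,x')\lesssim K\sqrt{s/m}\,\snorm{x-x'}_2$ (which follows from $\snorm{U(x+x')}_\infty\le 2K\sqrt{s/N}$ on the $2s$-sparse sum $x+x'$) overcounts at large scales and gives an extra factor of $\sqrt{s}$, degrading the exponent of $s$ from $1$ to $2$; the improvement requires a Rudelson-type bound on sums of random rank-one operators, or equivalently a scale-dependent chaining that exploits the low-rank structure of $u_iu_i^\top$ restricted to sparse subspaces rather than treating the increments as purely Lipschitz in $\snorm{\cdot}_2$. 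Tightening the logarithmic factor further -- from the $\log^4 N$ of Rudelson--Vershynin to $(\log s)^2(\log m)(\log N)$ -- then demands careful optimization of the admissible sequences at each dyadic level, balancing the sub-Gaussian and sub-exponential regimes of Bernstein, and is precisely the heart of Dirksen's refinement of generic chaining.
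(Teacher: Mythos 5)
The paper does not prove this statement at all: it is imported verbatim as a black box from Dirksen \cite{Dir15} (Theorem 4.1 there), chosen over the earlier versions of Cand\`es--Tao, Rudelson--Vershynin and Rauhut precisely because of its $\delta^{-2}$ dependence and milder logarithmic factors. So the relevant question is whether your sketch actually constitutes a proof, and it does not. Your reduction is correct as far as it goes: writing $\delta_s(U_I)=\sup_{x\in T_s}|Z_x|$ with $Z_x=q^{-1}\sum_i(\xi_i-q)(Ux)_i^2$, the bound $\snorm{Ux}_\infty\le K\sqrt{s/N}$, the resulting Bernstein tail with variance proxy $K^2s/m$, and the mixed-tail generic chaining bound $\gamma_2(T_s,d_2)+\gamma_1(T_s,d_\infty)$ are exactly the right ingredients, and you correctly diagnose that a Lipschitz/Dudley estimate in $\snorm{\cdot}_2$ loses a factor $\sqrt{s}$.

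The gap is that the proposal stops exactly where the theorem lives. You never bound $\gamma_2(T_s,d_2)$ or $\gamma_1(T_s,d_\infty)$; the step that converts the combinatorial structure of $T_s$ into the stated $(\log s)^2(\log m)(\log N)$ factor is the covering-number estimate for $\{Ux:x\in T_s\}$ in the $\snorm{\cdot}_\infty$ metric via Maurey's empirical method (or an equivalent scale-dependent admissible-sequence construction), and you acknowledge this is ``precisely the heart of Dirksen's refinement'' without carrying it out -- which is to concede the main point rather than prove it. A second, independent omission: the statement is a tail bound with a $\max(\,\cdot\,,\log(\eta^{-1}))$ term at the optimal $\delta^{-2}$ scale, and an expectation bound for the supremum plus Markov cannot deliver that dependence on $\eta$; one needs Dirksen's chaining-with-tails machinery (or a concentration inequality for the supremum of the empirical process, e.g.\ Talagrand/Klein--Rio, combined with the variance and $\gamma$-functional bounds), which the sketch does not address beyond the pointwise Bernstein inequality. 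As written, the proposal is a correct road map to the literature result the paper cites, not a proof of it.
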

Finally, we need the following slight modification of the previously mentioned result of Krahmer and Ward \cite{KW11} which, along with \cref{lem:decomposition}, will allow us to deduce a Johnson-Lindenstrauss property based on the restricted isometry property of the uniform Kac walk.
\begin{theorem}[{Modified \cite[Theorem~3.1]{KW11}}]\label{thm:krahmer-ward}
Fix $\eta > 0$ and $\epsilon\in(0,1)$, and consider a finite set $E\subseteq\mb{R}^d$ of cardinality $|E| = n$. Set $k\ge C_{\ref{thm:krahmer-ward}}\log(4n/\eta)$, and suppose that $\Phi \in \mb{R}^{m\times d}$ satisfies the Restricted Isometry Property of order $k$ and level $\delta\le\epsilon/4$. Let $\xi\in\mb{R}^d$ be a uniform vector in $\{\pm 1\}^d$, conditioned on having product $1$, and let $D_{\xi}$ denote the $d\times d$ diagonal matrix whose diagonal entries are given by $\xi$. Then, with probability at least $1-\eta$,
\[(1-\epsilon)\snorm{x}_2^2\le\snorm{\Phi D_\xi x}_2^2\le(1+\epsilon)\snorm{x}_2^2\]
uniformly for all $x\in E$.
\end{theorem}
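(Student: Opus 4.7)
The plan is to deduce this modified form from the original Krahmer--Ward theorem \cite[Theorem~3.1]{KW11} by a direct comparison to the unconditional Rademacher distribution. The only difference between the two statements is that the diagonal sign vector $\xi$ is here conditioned on the event $\mc{E} = \{\prod_{i=1}^{d} \xi_i = 1\}$ (which ensures $D_\xi \in \on{SO}(d)$, matching the symmetry setup used in \cref{lem:decomposition,lem:perm-invar}). No reworking of the underlying concentration analysis will be needed.

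Concretely, I would introduce a fully uniform Rademacher vector $\xi' \in \{\pm 1\}^d$. Exactly half of the $2^d$ sign patterns satisfy $\mc{E}$, so $\mb{P}[\mc{E}] = 1/2$, and hence for any event $A$ depending on the diagonal matrix,
\[\mb{P}[A(\xi)] \;=\; \mb{P}[A(\xi')\mid \mc{E}] \;\le\; 2\,\mb{P}[A(\xi')].\]
This inequality reduces the problem to the unconditional case at the price of a factor of two, absorbable into the constant $C_{\ref{thm:krahmer-ward}}$. I would then invoke the original Krahmer--Ward theorem with failure parameter $\eta/2$ and the same $E$, $\epsilon$, $\delta$, $k$: provided $k \ge C'\log(8n/\eta)$ and $\Phi$ has the RIP of order $k$ and level $\delta \le \epsilon/4$, one obtains
\[(1-\epsilon)\snorm{x}_2^2 \le \snorm{\Phi D_{\xi'} x}_2^2 \le (1+\epsilon)\snorm{x}_2^2 \quad \text{for all } x \in E\]
with probability at least $1-\eta/2$ over $\xi'$. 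Applying the comparison above to the complementary (failure) event yields the same conclusion for the conditioned $\xi$ with probability at least $1-\eta$.

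Because the technical core (a Hanson--Wright-style concentration inequality for the Rademacher chaos $\snorm{\Phi D_{\xi'} x}_2^2$, whose Frobenius and operator norms are controlled through the RIP of $\Phi$, followed by a union bound over $E$) is entirely inherited from \cite{KW11}, no genuine obstacle arises; the only new step is the probability-halving comparison, which is immediate. The constant ``$4$'' appearing in the statement is therefore an artifact of the constant in the original theorem together with the factor-of-two loss from conditioning, and can be absorbed into $C_{\ref{thm:krahmer-ward}}$.
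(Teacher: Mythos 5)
Your proposal is correct, but it takes a genuinely different route from the paper. You treat the original Krahmer--Ward theorem as a black box and reduce the conditioned case to the unconditional one by a change-of-measure observation: since exactly half of all sign patterns have product $1$, $\mb{P}[\mc{E}]=1/2$, so any failure event satisfies $\mb{P}[A(\xi)]=\mb{P}[A(\xi')\mid\mc{E}]\le 2\,\mb{P}[A(\xi')]$, and invoking the unconditional theorem at level $\eta/2$ (i.e.\ with $k\gtrsim\log(8n/\eta)$, absorbed into $C_{\ref{thm:krahmer-ward}}$) gives the claim. The paper instead opens up the proof of \cite[Theorem~3.1]{KW11} and observes that it only ever uses the distribution of $\xi$ restricted to \emph{proper} subsets of $[d]$; since the parity conditioning leaves the restriction of $\xi$ to any proper subset exactly i.i.d.\ Rademacher, the original argument applies verbatim, with no loss in the failure probability or the constant. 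Your argument buys simplicity and robustness -- it needs only the statement of \cite{KW11}, not its internals, and would work for any conditioning event of probability bounded below -- at the price of a factor of two in the failure probability (harmless here); the paper's argument buys identical constants and the reassurance that the conditioned vector is, for the purposes of the proof, literally indistinguishable from the unconditional one. Both are valid proofs of the stated theorem.
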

\begin{proof}[Proof sketch]
The proof is identical to the one given in \cite{KW11} once we note that the proof in \cite{KW11} only requires that the vector $\xi$ is distributed as an independent Rademacher vector when restricted to certain \emph{proper} subsets of $[d]$, which this altered random variable clearly satisfies.
\end{proof}

We now have all the tools needed to prove \cref{thm:uniform-main}.

\begin{proof}[Proof of \cref{thm:uniform-main}]
Let $q = K_1/d$. Applying \cref{thm:dirksen} and using the second part of \cref{lem:uniform-walk-bounds} at time $t = T_1 = 12d\log d$, we see that
\[\Psi_1' = \frac{1}{\sqrt{q}}\on{Proj}_{d,q}\circ Q_{T_1},\]
with probability $1-O(1/d)$, satisfies $\mb{P}[\delta_s(\Psi_1')\ge\epsilon/4]\le 1/d$ as long as
\[K_1\gtrsim s(\log d)\epsilon^{-2}(\log s)^2(\log K_1)(\log d).\]
Note in the case $K_1 = d$, the operator $\Psi_1'$ is actually orthogonal.

Now by \cref{thm:krahmer-ward}, we have that if $\delta_s(\Psi_1')\le\epsilon/4$ and $s\ge 40\log(4n/\eta)$, then $\Psi'_{1}\circ D_{\xi}$ acts as a $(1\pm\epsilon)$-isometry on our set of points $X$ with probability at least $1-\eta$. Choosing $\eta = 1/4$ and $s = 40\log(16n)$, we see that this property holds with probability at least $3/4-O(1/d)$ as long as
\[K_1\gtrsim\epsilon^{-2}(\log n)(\log d)^2(\log\log n)^2(\log K_1).\]
Since $K_1\le d$, certainly
\[K_1\gtrsim\epsilon^{-2}(\log n)(\log d)^3(\log\log n)^2\]
suffices. It is easily seen that as long as $C_{\ref{alg:uniform}}$ is chosen large enough, this bound indeed holds for the choice of $K_1$ in \cref{alg:uniform}. 

By Markov's inequality, the probability that the actual number of dimensions in the image of $\on{Proj}_{d,q}$ is more than a constant times $K_1$ is sufficiently small, so with probability at least $5/7$, we have that $\Psi'_{1}\circ D_{\xi}$ is a $(1\pm\epsilon)$-isometry on our points and projects down to at most $O(\epsilon^{-2}(\log n)(\log d)^3(\log\log n)^2)$ dimensions.

Next, we show that $\Psi_2$ is a $(1\pm\epsilon)$-isometry on the image of our point set, $(\Psi_1'\circ D_\xi) X$. In particular, applying the union bound using the first part of \cref{lem:uniform-walk-bounds} over all $n$ vectors in the image immediately gives the desired result as long as $C_{\ref{alg:uniform}}$ is large enough. Since $\Psi_1'\circ D_{\xi},\Psi_2$ are both $(1\pm\epsilon)$-isometries on the relevant sets of points, it follows (after rescaling $\epsilon$) that the composition satisfies the desired isometry property with probability at least, say, $7/10$.

Finally, note that in \cref{alg:uniform}, we use $\Psi_{1}$ instead of the more complicated $\Psi_1' \circ D_{\xi}$ -- that this can be done follows easily from \cref{lem:perm-invar} (and after decreasing the probability of success slightly from $7/10$ to say, $2/3$). 

We now quickly compute the runtime and memory of \cref{alg:uniform}. In order to compute $\Psi_{i}x$ (for $i = 1,2$), we apply rotations $R_{i_t,j_t,\theta_t}$ in sequence, and then sparsify. This clearly requires constant memory as computations can be done in place, and since each $R_{i_t,j_t,\theta_t}$ affects at most $2$ coordinates at once, the runtime is $O(d\log d + K_1\log n)$. 
\end{proof}

\subsection{RIP-optimality}
The proof of
\cref{cor:RIP-optimal} follows exactly as in {\cite[Lemma~5.1]{baraniuk2008simple}}.
\begin{proof}[{Proof sketch of \cref{cor:RIP-optimal}}]
This is an application of \cref{thm:uniform-main}, noting that the size of a $\delta$-net of all $s$-sparse unit vectors in $\mb{R}^{d}$ is at most $\binom{d}{s}(1+2/\delta)^s$.
\end{proof}

\section{Fast JL-Optimal and RIP-Optimal Transforms Using ORA: Proof of \texorpdfstring{\cref{thm:discrete-main}}{Theorem 1.6}}
\label{sec:ora}
The proof of \cref{thm:discrete-main} uses \cref{alg:averaging}.
\begin{algorithm}
\begin{algorithmic}
\STATE{\#Run orthogonal repeated averaging  for $O(d \log d)$ steps}
\STATE{Take $T_1 = C_{\ref{prop:max-coord}}d\log d\log\log d$ and $K_1 = \min(d,C_{\ref{alg:averaging}}\epsilon^{-2}\log n(\log\log n)^2(\log d)^3)$. Sample $Q_{T_1}$ from ORA and $D$ an independent diagonal random Rademacher matrix, and let
\[\Psi_1 := \sqrt{\frac{d}{K_1}}\cdot\on{Proj}_{d,K_1/d}~\circ ~Q_{T_1} \circ D.\]}
\STATE{Take $T_2\ge C_{\ref{prop:moment-mixing}}K_1\log n\log d$ and $K_2\ge C_{\ref{lem:big-data}}\epsilon^{-2}\log n$. Sample $Q_{T_2}'$ and let
\[\Psi_2 := \sqrt{\frac{K_1}{K_2}}\cdot\on{Sample}_{K_1,K_2}~\circ~Q_{T_2}'\circ D'.\]}
\STATE{Return \[\Psi = \Psi_2\circ\Psi_1.\]}
\STATE{\#If ORA is replaced by S-ORA, then $D,D'$ may be omitted and the first projection may be replaced with $\on{Proj}_{\on{Binom}(d,K_1/d)}$ and the second projection with $\on{Proj}_{K_2}$.}
\end{algorithmic}
\caption{Fast JL via ORA}
\label{alg:averaging}
\end{algorithm}

The analysis of \cref{alg:averaging} follows the same high level outline as the analysis of \cref{alg:uniform}. However, due to the unavailability of a tractable contractive coupling between the ORA and uniform distribution on the sphere, the proof of the analogues of \cref{lem:uniform-walk-bounds} is more intricate. We now proceed to the details. 

\begin{definition}
For a vector $\mbf{x}\in\mb{R}^d$ and for $k \in \mb{N} \cup \{0\}$, define
\[S_k(\mbf{x}) := \frac{1}{(2k)!}\sum_{i=1}^dx_i^{2k}.\]
In particular, $S_0(\mbf{x}) = d$. 
\end{definition}
The next simple but crucial lemma studies the evolution of $S_{k}(\mbf{x})$ under one step of ORA. 
\begin{lemma}\label{lem:moment-step}
Let $\mbf{x}$ be an $\mb{S}^{n-1}$-valued random vector, and let $R = R_{i,j,\theta}$ be a random rotation corresponding to a single step of ORA. Then,
\[\mb{E}_{R,\mbf{x}}[(S_k(R\mbf{x}))]\le \bigg(1-\frac{2}{d}\bigg)\mb{E}_{\mbf{x}}[S_k(\mbf{x})] + \frac{2^{1-k}}{d(d-1)}\sum_{a=0}^k\mb{E}_{\mbf{x}}[S_a(\mbf{x})]\mb{E}_{\mbf{x}}[S_{k-a}(\mbf{x})].\]
\end{lemma}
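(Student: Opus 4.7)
My plan is a direct moment computation, starting from a deterministic $\mbf{x}\in\mb{S}^{d-1}$ and taking $\mb{E}_\mbf{x}$ only at the end. I will condition on the ORA step's chosen pair $(i_t,j_t)=(i,j)$; only the coordinates $i,j$ change, becoming $(x_i+x_j)/\sqrt{2}$ and $(x_i-x_j)/\sqrt{2}$. The key algebraic fact, coming from the binomial theorem with the odd-degree cross terms cancelling, is
\[
\Big(\tfrac{x_i+x_j}{\sqrt{2}}\Big)^{2k}+\Big(\tfrac{x_i-x_j}{\sqrt{2}}\Big)^{2k}=2^{1-k}\sum_{a=0}^{k}\binom{2k}{2a}x_i^{2a}x_j^{2k-2a}.
\]
Subtracting the overwritten values $x_i^{2k}+x_j^{2k}$ then gives $\sum_{\ell}(R\mbf{x})_\ell^{2k}-\sum_{\ell}x_\ell^{2k}$ as an explicit function of $\mbf{x}$ and $(i,j)$.

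Next I average over the $d(d-1)$ ordered pairs (the ordering is immaterial since only even powers appear). The ``removed coordinates'' terms sum to $2(d-1)\sum_\ell x_\ell^{2k}$, producing the $(1-2/d)S_k(\mbf{x})$ factor after normalization. For the binomial piece, I will use the elementary identity $\sum_{i\ne j}x_i^{2a}x_j^{2k-2a}=\big(\sum_{i}x_i^{2a}\big)\big(\sum_{j}x_j^{2k-2a}\big)-\sum_{i}x_i^{2k}$ to extend the off-diagonal sum to a full double sum at the cost of a diagonal correction, and then invoke $\binom{2k}{2a}/(2k)!=1/((2a)!\,(2k-2a)!)$ to repackage the double sum as $S_a(\mbf{x})S_{k-a}(\mbf{x})$. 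Using also $\sum_{a=0}^{k}\binom{2k}{2a}=2^{2k-1}$ for the diagonal piece, I arrive at the sharp identity
\[
\mb{E}_{R}[S_k(R\mbf{x})]=\Big(1-\tfrac{2}{d}\Big)S_k(\mbf{x})\;-\;\tfrac{2^{k}}{d(d-1)}S_k(\mbf{x})\;+\;\tfrac{2^{1-k}}{d(d-1)}\sum_{a=0}^{k}S_a(\mbf{x})S_{k-a}(\mbf{x}).
\]
Dropping the nonpositive middle term yields, pointwise in $\mbf{x}$, the lemma's inequality with $S_a(\mbf{x})S_{k-a}(\mbf{x})$ in the last sum.

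To conclude the statement I take $\mb{E}_\mbf{x}$ on both sides; the single nonroutine point is to pass from $\mb{E}_\mbf{x}[S_a(\mbf{x})S_{k-a}(\mbf{x})]$ to the decoupled upper bound $\mb{E}_\mbf{x}[S_a(\mbf{x})]\cdot\mb{E}_\mbf{x}[S_{k-a}(\mbf{x})]$. I would justify this using the negative-correlation / FKG-type ingredient flagged in the paper's introduction: the functionals $S_a$ and $S_{k-a}$ are symmetric functions of the constrained simplex vector $(x_i^2)_i$, and in the relevant ORA context the corresponding joint law enjoys a negative association property that produces the desired inequality in the stated direction. The main obstacle is really just this final correlation step; the preceding algebra is routine once one is careful with the ordered/unordered pair count and with the diagonal correction that cleanly yields the favorable $-2^{k}S_k/(d(d-1))$ term.
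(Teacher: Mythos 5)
Your algebra through the pointwise identity is correct for $k\ge 1$ (it is essentially the paper's opening binomial expansion, carried out pointwise and with the diagonal term retained, which is a mild sharpening). The genuine gap is the final decoupling step, $\mb{E}_{\mbf{x}}[S_a(\mbf{x})S_{k-a}(\mbf{x})]\le \mb{E}_{\mbf{x}}[S_a(\mbf{x})]\,\mb{E}_{\mbf{x}}[S_{k-a}(\mbf{x})]$, which is where your route departs from the paper and which is false in the generality in which the lemma is stated (an arbitrary $\mb{S}^{d-1}$-valued random vector; the inductions in \cref{prop:moment-mixing} and \cref{prop:max-coord} really do apply it to the non-explicit law of $Q_{t'}\mbf{x}$, in the latter case even after conditioning on an event). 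The functionals $S_a$ and $S_{k-a}$ both measure how concentrated $\mbf{x}$ is, so over the randomness of $\mbf{x}$ they tend to be \emph{positively} correlated: for instance, if $\mbf{x}=e_1$ with probability $1/2$ and $\mbf{x}=d^{-1/2}(1,\dots,1)$ with probability $1/2$, then for $1\le a\le k-1$ one has $\on{Cov}(S_a,S_{k-a})=\tfrac14\big(\tfrac{1-d^{1-a}}{(2a)!}\big)\big(\tfrac{1-d^{a+1-k}}{(2k-2a)!}\big)>0$. Nor can a negative-association property rescue the step as you state it: $S_a$ and $S_{k-a}$ are non-decreasing functions of the \emph{same} coordinates $(x_i^2)_{i\le d}$, so negative association of $(x_i^2)_i$ (which is in any case not established for ORA anywhere) gives no inequality between $\mb{E}[S_aS_{k-a}]$ and $\mb{E}[S_a]\mb{E}[S_{k-a}]$, and for the ORA law there is no reason to expect the needed sign, since realizations of the walk that have mixed poorly make both functionals large simultaneously.

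The paper's FKG ingredient is used at a different level: for each fixed pair $i\ne j$ it conditions on $\mbf{x}_{-i,-j}$, so that $x_i^2+x_j^2=r^2$ is frozen; then $x_i^{2a}$ is non-decreasing and $x_j^{2(k-a)}$ is non-increasing in the single scalar $x_i^2$, and the one-variable FKG (Chebyshev) correlation inequality decouples the two \emph{coordinates}, $\mb{E}\big[x_i^{2a}x_j^{2k-2a}\mid\mbf{x}_{-i,-j}\big]\le\mb{E}\big[x_i^{2a}\mid\mbf{x}_{-i,-j}\big]\,\mb{E}\big[x_j^{2k-2a}\mid\mbf{x}_{-i,-j}\big]$; the paper then passes to products of expectations and completes $\sum_{i\ne j}$ to the full double sum, which is what produces $\mb{E}[S_a]\mb{E}[S_{k-a}]$ via $\binom{2k}{2a}/(2k)!=1/\big((2a)!\,(2k-2a)!\big)$. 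In other words, the exploitable negative correlation is between two coordinates sharing a fixed budget $r^2$, not between the global functionals $S_a,S_{k-a}$ across the randomness of $\mbf{x}$. Note also that your exact identity shows the lemma is \emph{equivalent} to $\sum_{a=0}^k\on{Cov}(S_a,S_{k-a})\le 2^{2k-1}\,\mb{E}[S_k]$, so a completion along your lines would have to bound these (generally positive) covariances by the diagonal slack you isolated; the proposal offers no such argument. The repair is to not decouple the $S$-functionals at all: keep the off-diagonal sum $\sum_{i\ne j}\mb{E}[x_i^{2a}x_j^{2k-2a}]$, apply the conditional FKG pairwise as above, and only then complete the double sum, as in the paper.
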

\begin{proof}
By direct computation using the definition of $R$, we have
\begin{align*}
\mb{E}_{R,\mbf{x}}[S_k(R\mbf{x})] &= \bigg(1-\frac{2}{d}\bigg)\mb{E}_{\mbf{x}}[S_k(\mbf{x})] \\ & \qquad + \frac{1}{(2k)!d(d-1)}\bigg(\sum_{i\neq j}\mb{E}_{\mbf{x}}\bigg[\bigg(\frac{x_i+x_j}{\sqrt{2}}\bigg)^{2k}\bigg] + \sum_{i\neq j}\mb{E}_{\mbf{x}}\bigg[\bigg(\frac{x_i-x_j}{\sqrt{2}}\bigg)^{2k}\bigg]\bigg)
\\ &= \bigg(1-\frac{2}{d}\bigg)\mb{E}_{\mbf{x}}[S_k(\mbf{x})] + \frac{1}{d(d-1)}\sum_{i\neq j}\sum_{a=0}^k\frac{2^{1-k}}{(2k)!}\binom{2k}{2a}\mb{E}_{\mbf{x}}[x_i^{2a}x_j^{2k-2a}]
\\ &\le\bigg(1-\frac{2}{d}\bigg)\mb{E}_{\mbf{x}}[S_k(\mbf{x})] + \frac{1}{d(d-1)}\sum_{i\neq j}\sum_{a=0}^k\frac{2^{1-k}}{(2k)!}\binom{2k}{2a}\mb{E}_{\mbf{x}}[x_i^{2a}]\mb{E}_{\mbf{x}}[x_j^{2k-2a}]
\\ &\le\bigg(1-\frac{2}{d}\bigg)\mb{E}_{\mbf{x}}[S_k(\mbf{x})] + \frac{2^{1-k}}{d(d-1)}\sum_{a=0}^k\mb{E}_{\mbf{x}}[S_a(\mbf{x})]\mb{E}_{\mbf{x}}[S_{k-a}(\mbf{x})].
\end{align*}
The first inequality follows from the fact that, conditioned on $\mbf{x}_{-i,-j}$, $(x_i^{2a},x_j^{2k-2a})$ is distributed as $(y^{2a}, (\sqrt{r^{2} - y^{2}})^{2k-2a})$, where $r \ge 0$ is determined by $\mbf{x}_{-i,-j}$, and $y^{2}$ is some distribution (determined by the original distribution on $\mb{S}^{n-1}$ and $\mbf{x}_{-i,-j}$) on the interval $[0,r^{2}]$. 
Since the first coordinate is a non-decreasing function of $y^{2}$ and the second coordinate is a non-increasing function of $y^{2}$, it follows from the FKG inequality that
\[\mb{E}[y^{2a}(r^2-y^2)^{k-a}]\le\mb{E}[y^{2a}]\mb{E}[(r^2-y^2)^{k-a}].\qedhere\]
\end{proof}
From this lemma and a careful computation, one can deduce the following upper bound on the $p$-th moments of the coordinates of $\mbf{x}$.
\begin{proposition}\label{prop:moment-mixing}
There exists an absolute constant $C_{\ref{prop:moment-mixing}}$ for which the following holds. 
Let $\mbf{x}$ be an $\mb{S}^{n-1}$ distributed random vector (in particular, $\mbf{x}$ can be deterministic).
Fix a dimension $d\ge 25$, a positive integer $p\le d$, and consider a time $t\ge C_{\ref{prop:moment-mixing}}pd\log d$. Then,
\[\mb{E}_{Q_{t}, \mbf{x}}[S_p(Q_t\mbf{x})]\le\frac{2^{p-2}d^{1-p}}{p!}.\]
\end{proposition}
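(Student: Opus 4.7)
The plan is to prove this by strong induction on $p$, driven by the one-step recursion from Lemma 4.2. Write $M_k(t) := \mb{E}_{Q_t,\mbf{x}}[S_k(Q_t\mbf{x})]$ and $B_k := 2^{k-2} d^{1-k}/k!$. The base case $p = 1$ is immediate since $\|\mbf{x}\|_2^2 = 1$ forces $M_1(t) = 1/2 = B_1$ for every $t$. For the inductive step I would assume that $M_j(t) \le B_j$ for every $j < p$ and every $t \ge T_j := C_{\ref{prop:moment-mixing}} j d\log d$, and deduce the analogous bound at level $p$ for $t \ge T_p$.

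Extracting the $a \in \{0, p\}$ contributions in Lemma 4.2 absorbs a linear-in-$M_p$ piece, giving
\[M_p(t+1) \le \nu_p M_p(t) + \frac{2^{1-p}}{d(d-1)} \sum_{a=1}^{p-1} M_a(t)\, M_{p-a}(t),\]
with $\nu_p := 1 - 2/d + 2^{2-p}/(d-1)$ and contraction rate $\mu_p := 1-\nu_p \ge 1/(2d)$. Substituting the inductive bound $M_a \le B_a$, the factored form $B_a B_{p-a} = 2^{p-4}\,d^{2-p}/(a!(p-a)!)$ combines with $\sum_{a=1}^{p-1}\binom{p}{a} = 2^p - 2$ to yield the clean identity $\sum_{a=1}^{p-1} B_a B_{p-a} = 2^{p-4}(2^p - 2) d^{2-p}/p!$. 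Plugging this into the closed-form equilibrium of the resulting linear one-dimensional recursion gives $\beta_p/\mu_p = (1-2^{1-p}) d B_p / (2(d-1) - 2^{2-p}d)$, which is strictly less than $B_p$ for every $d \ge 25$ and $p \ge 2$ (in fact below $B_p/2$ asymptotically). This built-in slack is the structural fact that lets the induction close at the equilibrium level.

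The remaining task is the transient analysis. Iterating the recursion yields the Duhamel-type bound $M_p(t) \le \nu_p^t M_p(0) + \sum_{s=0}^{t-1} \nu_p^{t-1-s}\beta_p(s)$. I would split the inner sum at $s = T_{p-1}$: on $[T_{p-1}, t)$ the inductive hypothesis bounds $\beta_p(s)$ by its equilibrium value, contributing at most roughly $B_p/2$; on $[0, T_{p-1})$ the only available control is the universal estimate $M_a(s) \le 1/(2a)!$ coming from $\sum_i x_i^2 = 1$, which by extraction of coefficients from $\cosh^2$ gives $\beta_p(s) \le 2^p/(d(d-1)(2p)!)$, and this contribution is suppressed by the prefactor $\nu_p^{t-T_{p-1}}$. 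The deterministic tail $\nu_p^t M_p(0) \le \nu_p^t/(2p)!$ decays similarly. A direct estimate with $\mu_p \ge 1/(2d)$ and $\log((2p)!/B_p) = O(p\log d)$ then shows the total transient falls below $B_p/2$ once $t \ge C_{\ref{prop:moment-mixing}} p d \log d$ for a sufficiently large absolute constant.

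The hard part is the early-time piece of the Duhamel sum, since the inductive hypothesis has not yet activated for all lower moments on $[0, T_{p-1})$ and one has to propagate bounds through many levels using only universal inequalities. Making the contribution from this interval fit within the single-level time budget $T_p - T_{p-1} = O(d\log d)$ requires exploiting both the factor-of-$4^{p-1}$ gap between $B_p$ and the true Haar equilibrium $d^{1-p}/(2^p p!)$ and the precise combinatorial structure of the equilibrium identity derived above; this is where the quantitative choice of $B_p = 2^{p-2} d^{1-p}/p!$ (rather than the smaller true equilibrium) plays a decisive role.
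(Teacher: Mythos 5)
Your recursion, induction skeleton, base case, and treatment of the equilibrium term all match the paper: you extract the $a\in\{0,p\}$ terms from \cref{lem:moment-step} to get the contraction factor $1-\tfrac2d+\tfrac{2^{2-p}}{d-1}$, and your identity $\sum_{a=1}^{p-1}B_aB_{p-a}=2^{p-4}(2^p-2)d^{2-p}/p!$ is exactly the paper's computation, with the steady-state contribution landing safely below $B_p$. The gap is precisely the part you flag as ``the hard part'' and do not resolve: the transient. Running Duhamel from time $0$ and splitting at $s=T_{p-1}$ cannot work within the stated threshold $t\ge C_{\ref{prop:moment-mixing}}\,p\,d\log d$. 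For $s$ just below $T_{p-1}$ the term $a=p-1$ in $\beta_p(s)$ is only controlled by the universal bound $M_{p-1}(s)\le 1/(2p-2)!$, so $\beta_p(s)$ can exceed the equilibrium level $\mu_pB_p\asymp 2^{p-2}d^{-p}/p!$ by a factor of order $d^{\,p-2}p!/(4^p(2p-2)!)$, while the damping available between $T_{p-1}$ and $T_p=T_{p-1}+Cd\log d$ is only $e^{-\Theta(C\log d)}=d^{-\Theta(C)}$, a fixed power of $d$ independent of $p$. Killing that contribution would force $t-T_{p-1}=\Omega(p\,d\log d)$, i.e.\ a threshold $\Theta(p^2d\log d)$, not $\Theta(pd\log d)$; and the slacks you invoke (the $4^{p-1}$ gap to the Haar equilibrium, the combinatorial identity) are exponential in $p$, which cannot compensate a deficit of size $d^{\Theta(p)}$.

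The missing idea, which is how the paper closes the induction in a single $O(d\log d)$ window per level, is to \emph{restart} rather than to propagate from time $0$: set $t'=C_{\ref{prop:moment-mixing}}(p-1)d\log d$ and write $Q_t\mbf{x}\sim Q_{t-t'}\mbf{y}$ with $\mbf{y}=Q_{t'}\mbf{x}$ (this is why the proposition is stated for random initial vectors). Since every coordinate of $\mbf{y}$ has absolute value at most $1$, one has the pointwise comparison
\begin{equation*}
S_p(\mbf{y})\le\frac{1}{2p(2p-1)}\,S_{p-1}(\mbf{y}),
\end{equation*}
so the inductive bound at level $p-1$ at time $t'$ gives $e_{p,t'}\lesssim d^{2-p}2^p/(p!\,(2p-1))$, i.e.\ an initial condition only a factor $O(d)$ above the target $B_p$. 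From there, contraction at rate $\ge 23/(24d)$ per step needs only $O(d\log d)$ further steps, which fits the per-level budget and yields the linear-in-$p$ threshold. Without this (or an equivalent device showing the $p$-th moment is already within $\mathrm{poly}(d)$ of its target by time $T_{p-1}$), your Duhamel argument does not prove the stated proposition.
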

\begin{remark}
The proof below shows that taking $C_{\ref{prop:moment-mixing}} = 2.25$ is sufficient. 
\end{remark}
\begin{proof}
We will prove this by strong induction on $p\ge 1$. Also, for lightness of notation, we will omit subscripts in the expectation. 

For $p =1$, note that $S_1(Q_{t}\mbf{x}) = 1/2$ deterministically, so that the assertion holds. Hence, let $p\ge 2$, and suppose we know the statement for $1,\ldots,p-1$. Let $e_{q,t} = \mb{E}[S_q(Q_t\mbf{x})]$.

Let $t' = C_{\ref{prop:moment-mixing}}(p-1)d\log d$, and note that $Q_{t}\mbf{x} = Q_{t-t'}(Q_{t'}\mbf{x}) \sim Q_{t-t'}\mbf{y}$, where $\mbf{y}$ is an $\mb{S}^{n-1}$-valued random vector distributed as $Q_{t'}\mbf{x}$. Hence, by the inductive hypothesis, we have that for all $t\ge t'$ and $0\le q\le p-1$,
\[e_{q,t} = \mb{E}[S_q(Q_{t-t'}\mbf{y})]\le\frac{2^{q-2}d^{1-q}}{q!}.\]
Therefore, by \cref{lem:moment-step} and the above, we have for $t\ge t'$ that
\begin{align}
e_{p,t+1} &= \mb{E}[S_p(Q_{t+1-t'}\mbf{y})]\\
&\le\bigg(1-\frac{2}{d}\bigg)\mb{E}[S_p(Q_{t-t'}\mbf{y})] + \frac{2^{1-p}}{d(d-1)}\sum_{a=0}^p\mb{E}[S_a(Q_{t-t'}\mbf{y})]\mb{E}[S_{p-a}(Q_{t-t'}\mbf{y})]\notag
\\ &\le\bigg(1-\frac{2}{d}+\frac{2^{2-p}}{d-1}\bigg)\mb{E}[S_p(Q_{t-t'}\mbf{y})] + \bigg(\frac{2^{1-p}}{d(d-1)}\sum_{a=1}^{p-1}\frac{2^{p-4}d^{2-p}}{a!(p-a)!}\bigg)\notag
\\ &\le\bigg(1-\frac{2}{d}+\frac{2^{2-p}}{d-1}\bigg)\mb{E}[S_p(Q_{t-t'}\mbf{y})] + \bigg(\frac{2^{-3}}{d-1}\frac{d^{1-p}(2^p-2)}{p!}\bigg)\notag
\\&=\bigg(1-\frac{2}{d}+\frac{2^{2-p}}{d-1}\bigg)e_{p,t}+\bigg(\frac{2^{-3}}{d-1}\frac{d^{1-p}(2^p-2)}{p!}\bigg).\label{eq:iteration-1}
\end{align}
To leverage the above relation, we also need to upper bound $e_{p,t'} = \mb{E}[S_p(\mbf{y})]$. Indeed, by the inductive hypothesis, and the fact that each coordinate of $\mbf{y}$ is bounded in absolute value by $1$, it follows that
\begin{equation}\label{eq:initial-1}
e_{p,t'} = \mb{E}[S_p(\mbf{y})]\le\frac{1}{(2p)(2p-1)}\mb{E}[S_{p-1}(\mbf{y})]\le\frac{2^{p}\cdot d^{2-p}}{p!(2p-1)}.
\end{equation}
To summarize, \cref{eq:iteration-1,eq:initial-1} demonstrate that
\begin{align*}
e_{p,t+1}&\le\bigg(1-\frac{2}{d}+\frac{2^{2-p}}{d-1}\bigg)e_{p,t} + \bigg(\frac{2^{-3}}{d-1}\frac{d^{1-p}(2^p-2)}{p!}\bigg).\\
e_{p,t'}&\le\frac{2^{p}\cdot d^{2-p}}{p!(2p-1)}.
\end{align*}
Since $p\ge 2$ and $d\ge 25$, we have that
$$\left(1-\frac{2}{d} + \frac{2^{2-p}}{d-1}\right) \le \left(1-\frac{2}{d}+\frac{1}{d-1}\right) \le \left(1-\frac{23}{24d}\right).$$
Therefore, by iterating the above relations, we have for $t > t'$ that
\begin{align*}
    e_{p,t} 
    &\le \left(1-\frac{23}{24d}\right)^{t-t'}\cdot \frac{2^{p}d^{2-p}}{p!(2p-1)} + \\
    &\quad + \left(\frac{2^{-3}}{d-1}\frac{d^{1-p}(2^p - 2)}{p!}\right)\sum_{j=0}^{t-t'-1}\left(1-\frac{23}{24d}\right)^{j} \\
    &\le e^{-23(t-t')/24d}\cdot\frac{2^{p}d^{2-p}}{p!(2p-1)} + \frac{2^{p-3}d^{1-p}}{p!}\cdot \frac{25}{23}.
\end{align*}
In particular, for $t - t' \ge 48d\log{d}/23,$ we see that
\begin{align*}
e_{p,t} 
&\le \frac{2^{p-2}d^{1-p}}{p!\cdot 18} + \frac{2^{p-2}d^{1-p}}{p!}\cdot \frac{25}{46}\\
&\le \frac{2^{p-2}d^{1-p}}{p!},
\end{align*}
which completes the inductive step. \qedhere 


\end{proof}
We will also need the following estimate regarding the maximum coordinate of $Q_t\mbf{x}$; this estimate is better than simply applying Markov's inequality to  \cref{prop:moment-mixing}.
\begin{proposition}\label{prop:max-coord}
Fix a vector $\mbf{x}\in \mb{S}^{d-1}$. Let $t \ge C_{\ref{prop:max-coord}}d\log d\log \log d$. Then,
\[\mb{P}\bigg[\snorm{Q_t\mbf{x}}_{\infty}\ge 10\sqrt{\frac{\log d}{d}}\bigg]\le d^{-2}.\]
\end{proposition}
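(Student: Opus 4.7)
The plan is to combine Markov's inequality with the moment estimate of \cref{prop:moment-mixing}. Since $|(Q_t\mbf{x})_i|\le \|Q_t\mbf{x}\|_2 = 1$, for every positive integer $p$ we have $\|Q_t\mbf{x}\|_\infty^{2p}\le\sum_{i=1}^d(Q_t\mbf{x})_i^{2p} = (2p)!\cdot S_p(Q_t\mbf{x})$, so Markov's inequality gives
\[
\mb{P}\!\left[\|Q_t\mbf{x}\|_\infty\ge 10\sqrt{\tfrac{\log d}{d}}\right]\le\frac{(2p)!\cdot\mb{E}[S_p(Q_t\mbf{x})]}{(100\log d/d)^p}.
\]
Invoking \cref{prop:moment-mixing} for this $p$ (valid when $t\ge C_{\ref{prop:moment-mixing}}pd\log d$) gives $\mb{E}[S_p(Q_t\mbf{x})]\le 2^{p-2}d^{1-p}/p!$; applying $(2p)!/p!\le 4^pp!$ together with Stirling's approximation to $p!$ reduces the right-hand side to a quantity of the form $\lesssim d\cdot\sqrt{p}\cdot\bigl(2p/(25e\log d)\bigr)^p$.

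I would then take $p$ of order $\log d$, with a large enough implicit constant that $2p/(25e\log d)$ is bounded by a fixed fraction strictly less than $1/e$. Raising this base to the $\log d$-th power contributes a factor $d^{-c}$ for some $c>3$, which together with the leading factor of $d$ produces a tail bound of at most $d^{-2}$ with slack to spare, provided that \cref{prop:moment-mixing} can be invoked with $p=\Theta(\log d)$ at the given time.

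The main obstacle is precisely this last proviso: the stated time requirement of \cref{prop:moment-mixing} for $p$-th moments is $t\gtrsim pd\log d$, which for $p=\Theta(\log d)$ becomes $d(\log d)^2$, larger than the advertised $d\log d\log\log d$ by a factor of $\log d/\log\log d$. To close this gap I would refine the recursive argument underlying \cref{prop:moment-mixing}: rather than establish bounds on $e_{1,t},e_{2,t},\ldots,e_{p,t}$ sequentially---each order costing an additive $O(d\log d)$ steps---I would run the system of recursions from \cref{lem:moment-step} in parallel, observing that once the low-order $e_a$'s stabilize near their equilibrium values $\asymp d^{1-a}$, the cross terms $e_a\cdot e_{p-a}$ in the recursion for $e_p$ are themselves already at the target scale, so that the geometric contraction $(1-2/d)^t$ drives $e_p$ down to equilibrium in only $O(d\log\log d)$ additional steps per order. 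Telescoping across $p\le\log d$ yields the target time budget, after which the Markov step completes the argument; this telescoping bootstrap is the step I expect to require the most care.
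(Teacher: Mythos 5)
Your reduction of the tail bound to a $p$-th moment estimate via Markov, with $p=\Theta(\log d)$ and the threshold $10\sqrt{(\log d)/d}$, is exactly the paper's final step, and you correctly identify the real obstacle: invoking \cref{prop:moment-mixing} at $p=\Theta(\log d)$ costs $t\gtrsim d(\log d)^2$. However, your proposed repair does not close this gap. The bottleneck in the recursion of \cref{lem:moment-step} is not the forcing term $\sum_a e_{a,t}e_{p-a,t}$ (which, as you say, sits at the target scale once the lower orders have equilibrated) but the decay of the \emph{initial condition} of $e_{p,t}$: the homogeneous part contracts only at rate $1-\Theta(1/d)$ per step, and at the time the lower-order moments have stabilized, the only bound on $e_p$ available from expectations alone is the trivial one $S_p\le S_{p-1}/((2p)(2p-1))$ (using $\snorm{\cdot}_\infty\le 1$), which is a factor $\asymp d$ above the equilibrium value $\asymp 2^{p-2}d^{1-p}/p!$. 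Contracting a factor of $d$ at rate $1-\Theta(1/d)$ costs $\Theta(d\log d)$ steps \emph{per order}, whether the recursions are run sequentially or in parallel; a purely expectation-based parallel analysis therefore still yields $\Theta(d(\log d)^2)$ for $p=\Theta(\log d)$, and the claimed $O(d\log\log d)$ extra steps per order is unjustified.

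The paper's proof supplies the missing mechanism, and it is probabilistic rather than a rearrangement of the moment recursions: it proves by induction a \emph{conditional} moment bound. At each stage $p$ it applies Markov's inequality to $S_{p-1}(Q_{t'}\mbf{x})$ to define a bad event $B_p$ of probability $\le d^{-5}$, accumulates these into nested events $A_p$ with $\mb{P}[A_p]\le p/d^5$, and observes that on $A_p^c$ one has the pointwise bound $\snorm{Q_{t'}\mbf{x}}_\infty\le L_p\asymp\sqrt{p}\,d^{-1/2+\Theta(1/p)}$. Feeding this $\ell^\infty$ control into $S_p\le \snorm{\cdot}_\infty^2 S_{p-1}/((2p)(2p-1))$ gives an initial value for the conditional moment $e_{p,t'}$ that is only a factor $d^{O(1/p)}$ above equilibrium, so stage $p$ requires only $O(d\log d/p+d)$ additional steps; summing over $p\le 10\log d$ gives the $O(d\log d\log\log d)$ budget, after which Markov on $S_p$ conditional on $A_p^c$ (together with $\mb{P}[A_p]\le d^{-4}$) yields the stated $d^{-2}$ bound. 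This bootstrapping-by-conditioning, with the attendant bookkeeping of conditional recursions (the $(1+d^{-4})$ correction factors), is the essential idea absent from your sketch; without it, or some substitute source of a near-equilibrium initial bound on $e_p$, the telescoping you hope for cannot start.
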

\begin{proof}[Proof of \cref{prop:max-coord}]
We may assume that $d\ge 10^2$ as otherwise, the desired conclusion holds trivially. 

We will show the following: for any $p \in [d]$, there exists a collection of events $A_1,\ldots,A_p$ such that the following holds:
\begin{enumerate}
    \item $A_1 \subseteq \ldots \subseteq A_p$;
    \item $A_{p}$ depends only on the randomness used to generate the ORA for the first $C_{\ref{prop:max-coord}}(d\log d\log(p-1) + d(p-1))$ steps;
    \item  $\mb{P}[A_j] \le \frac{j}{d^5}$ for $1\le j\le p$;
    \item For any $t\ge C_{\ref{prop:max-coord}}(d\log d\log p + dp)$,
    \begin{equation}\label{eq:induction-1b}
\mb{E}[S_p(Q_t\mbf{x})|A_p^{c}]\le5^{p-1}\cdot \frac{d^{1-p}}{2^{p}\cdot p!}.
\end{equation}
\end{enumerate}
We prove this by strong induction on $p\ge 1$. For $p = 1$, we simply set $A_1 = \emptyset$ and note that $S_1(Q_{t}\mbf{x}) = 1/2$ deterministically for all times $t$, so that the requirements for $A_1$ are trivially satisfied. Now suppose $p\ge 2$, and we know the statement for $0,\ldots,p-1$. 

Let $B_p$ be the event that at time $t' = C_{\ref{prop:max-coord}}(d\log d\log (p-1)+d(p-1))$, we have
\[S_{p-1}(Q_{t'}\mbf{x})\ge d^5 \cdot5^{p-2}\frac{d^{2-p}}{2^{p-1}(p-1)!}.\]
Clearly, $B_p$ only depends on the randomness used to generate the first $C_{\ref{prop:max-coord}}d\log d\log (p-1)$ steps. Moreover,
by Markov's inequality and the inductive hypothesis, we have that
\[\mb{P}[B_p|A_{p-1}^c]\le \frac{1}{d^5}\]
and therefore, if we set $A_p = B_p\cup A_{p-1}$ then $A_p$ satisfies the first three conclusions of the inductive hypothesis. To complete the inductive step, we only need to verify the last conclusion.

For this, we begin by noting that deterministically under $A_p^c$, 
\[\snorm{Q_{t'}\mbf{x}}_\infty\le L_p := 2\sqrt{2p-2}d^\frac{7-p}{2(p-1)}.\] The key feature of this bound that we need is that $L_p = \sqrt{p}d^{-1/2 + \Theta(1/p)}$. Thus, by the induction hypothesis,
\begin{align}
\mb{E}[S_p(Q_{t'}\mbf{x})|A_p^{c}] &\le \frac{L_{p}^2}{(2p)(2p-1)}\mb{E}[S_{p-1}(Q_{t'}\mbf{x})|A_p^{c}]\notag\\
&\le \frac{L_{p}^2}{(2p)(2p-1)}\mb{E}[S_{p-1}(Q_{t'}\mbf{x})|A_{p-1}^{c}]\cdot \left(1+\frac{1}{d^{4}}\right)\notag\\
&\le 5^{p+2}\frac{d^{1-p+6/(p-1)}}{2^pp!}.\label{eq:initial-1b}
\end{align}
Let $e_{q,t} = \mb{E}[S_q(Q_t\mbf{x})|A_p^c]$.
For $t\ge t'$ the distribution of $Q_t\mbf{x}$ is the same as the distribution of $Q_{t-t'}\mbf{y}$, where $\mbf{y}$ is an $\mb{S}^{n-1}$-valued random vector distributed as $Q_{t'}\mbf{x}$. Also, by the inductive hypothesis, we have that for all $t\ge t'$ and $1\le q\le p-1$,
\[e_{q,t} = \mb{E}[S_q(Q_{t-t'}\mbf{y})|A_p^{c}]\le 5^{q-2}\frac{d^{1-q}}{2^qq!}\bigg(1+\frac{1}{d^4}\bigg),\] where, as before, the final factor comes from conditioning on $A_p^{c}$ and not $A_q^{c}$. Therefore, by a trivial modification of \cref{lem:moment-step}, we have for $t\ge t'$ that
\begin{align}
e_{p,t+1} = \mb{E}[S_p(Q_{t+1-t'}\mbf{y})|A_p^{c}]&\le\bigg(1-\frac{2}{d}\bigg)\mb{E}[S_p(Q_{t-t'}\mbf{y})|A_p^{c}] \notag
\\&\qquad + \frac{2^{1-p}}{d(d-1)}\sum_{a=0}^p\mb{E}[S_a(Q_{t-t'}\mbf{y})|A_p^{c}]\mb{E}[S_{p-a}(Q_{t-t'}\mbf{y})|A_p^{c}]\notag
\\ &\le\bigg(1-\frac{2}{d}+\frac{2^{2-p}}{d-1}\bigg)\mb{E}[S_p(Q_{t-t'}\mbf{y})|A_p^{c}]\notag
\\ &\qquad + 5^{p-2}\bigg(\frac{2^{1-p}}{d(d-1)}\sum_{a=1}^{p-1}\frac{d^{2-p}}{2^pa!(p-a)!}\bigg)\bigg(1+\frac{1}{d^4}\bigg)^2\notag
\\ &=\bigg(1-\frac{2}{d}+\frac{2^{2-p}}{d-1}\bigg)\mb{E}[S_p(Q_{t-t'}\mbf{y})|A_p^{c}]\notag
\\ &\qquad + 5^{p-2}\bigg(\frac{2^{1-p}}{d-1}\frac{d^{1-p}(2^p-2)}{2^pp!}\bigg)\bigg(1+\frac{1}{d^4}\bigg)^2\notag
\\&=\bigg(1-\frac{2}{d}+\frac{2^{2-p}}{d-1}\bigg)e_{p,t}+5^{p-2}\bigg(\frac{2^{1-p}}{d-1}\frac{d^{1-p}(2^p-2)}{2^pp!}\bigg)\bigg(1+\frac{1}{d^4}\bigg)^2.\label{eq:iteration-1b}
\end{align}
To summarize, \cref{eq:iteration-1b,eq:initial-1b} demonstrate that
\begin{align*}
e_{p,t+1}&\le\bigg(1-\frac{2}{d}+\frac{2^{2-p}}{d-1}\bigg)e_{p,t} + 5^{p-2}\bigg(\frac{2^{1-p}}{d-1}\frac{d^{1-p}(2^p-2)}{2^pp!}\bigg)\bigg(1+\frac{1}{d^4}\bigg)^2,\\
e_{p,t'}&\le 5^{p+2}\frac{d^{1-p+6/(p-1)}}{2^pp!}.
\end{align*}
Now, a very similar computation to the one in the proof of \cref{prop:moment-mixing} shows that for 
$$ t-t' \ge O(1)\left(\frac{d\log{d}}{p} + d\right),$$
$e_{p,t} \le 5^{p-1}\cdot\frac{d^{1-p}}{2^{p}\cdot p!},$
which completes the inductive step.\\ 



The proof of the conclusion of \cref{prop:max-coord} now follows easily. Indeed, take $p = 10 \log d$, and note that $\mb{P}[A_p]\le\frac{1}{d^4}$ and that for $t' \ge  C_{\ref{prop:max-coord}}d\log d \log \log d$ Markov's inequality applied to \cref{eq:induction-1b} yields
\[\mb{P}\bigg[S_p(Q_{t'}\mbf{x})\le d^4\cdot 5^{p-1}\frac{d^{1-p}}{2^pp!}\bigg|A_p^{c}\bigg]\le\frac{1}{d^4}.\]
Trivial estimation based on $S_p(\mbf{x})\ge\snorm{\mbf{x}}_\infty^p/(2p)!$ gives the desired result.
\end{proof}

Finally, we prove an estimate which will be required in the second phase of \cref{alg:averaging}. For this, we will make use of the following result of Lata{\l}a \cite{Lat97}.

\begin{lemma}[{\cite[Corollary~2]{Lat97}}]\label{lem:latala}
For a random variable $X$, let $\snorm{X}_s = (\mb{E}|X|^s)^{1/s}$. There exists an absolute constant $C_{\ref{lem:latala}}$ for which the following holds. Let $X_1,\ldots,X_n$ be independent copies of a symmetric random variable $X$. Then,  
\[\snorm{X_1+\ldots+X_n}_p\le C_{\ref{lem:latala}}\sup\bigg\{\frac{p}{s}\bigg(\frac{n}{p}\bigg)^{1/s}\snorm{X}_{s}:\max(2,p/n)\le s\le p\bigg\}.\]
\end{lemma}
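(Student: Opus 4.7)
My plan is to prove the upper bound via truncation at a level $u>0$ to be optimized, following the standard approach in the area. Since $X$ is symmetric, write $X_i=\epsilon_i Y_i$ with $\epsilon_i$ iid Rademacher and $Y_i=|X_i|$. Decompose $X_i = X_i^{\le u} + X_i^{>u}$, where $X_i^{\le u}:= X_i\one_{|X_i|\le u}$, and handle the two contributions separately using the triangle inequality on the $L^p$ norm.

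For the bounded contribution, I would condition on $(Y_i)$, apply the Khintchine inequality $\snorm{\sum_i \epsilon_i a_i}_p \lesssim \sqrt{p}(\sum_i a_i^2)^{1/2}$ (valid for $p\ge 2$), and take the outer $L^p$ norm. Combined with a Rosenthal-type estimate on $\snorm{\sum_i (X_i^{\le u})^2}_{p/2}$ (using that $(X_i^{\le u})^2 \le u \cdot |X_i^{\le u}|$ to absorb fourth moments into the second moments plus the truncation scale), this yields
\[\snorm{\sum_i X_i^{\le u}}_p \lesssim \sqrt{pn\,\mb{E}[X^2\one_{|X|\le u}]} + pu.\]
For the tail contribution, the triangle inequality reduces the problem to sums of iid non-negative variables $W_i = |X_i|\one_{|X_i|>u}$; I would control their $p$-th moment via Hoffmann--J{\o}rgensen in the form $\snorm{\sum_i W_i}_p \lesssim \snorm{\max_i W_i}_p + n\,\mb{E}[W]$, together with $\snorm{\max_i W_i}_p\le n^{1/p}\snorm{X}_p$.

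The crux is to optimize over $u$ so that the bound matches the supremum expression on the right-hand side. I would parametrize by $s\in[\max(2,p/n),p]$, choosing $u=u(s)$ so that $n\,\mb{P}(|X|>u)\approx p/s$. Markov's inequality at the $s$-th moment then forces $u\lesssim \snorm{X}_s(sn/p)^{1/s}$. The tail contribution becomes of order $(p/s)\cdot u$, while a shell decomposition yielding $\mb{E}[X^2\one_{|X|\le u}]\lesssim u^{2-s}\snorm{X}_s^s$ controls the Gaussian term similarly; both combine to give $\tfrac{p}{s}(n/p)^{1/s}\snorm{X}_s$ after absorbing $s^{1/s}=O(1)$ into absolute constants. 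The boundary cases $s=2$ and $s=p/n$ correspond respectively to the Gaussian regime (where Khintchine alone suffices) and to the heavy-tail regime (where only a handful of summands contribute).

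The main obstacle is assembling these pieces into a single clean supremum bound with \emph{absolute} constants across the transitional regime where neither the Gaussian nor the tail estimate is clearly dominant, and verifying that the parametrization $s \mapsto u(s)$ is monotone enough to permit a discretization of the range of $s$ with only constant loss. Lata\l{}a's original argument in \cite{Lat97} sidesteps this via a clever discrete stratification through the order statistics of the $|X_i|$, which also yields the matching lower bound; since we only need the inequality as a black box for the second phase of \cref{alg:averaging}, we do not reproduce the full proof.
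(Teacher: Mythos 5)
The paper does not prove this lemma at all: it is quoted verbatim from Lata{\l}a \cite{Lat97} (Corollary~2 there) and used purely as a black box in the proof of \cref{lem:big-data}. So your closing decision to defer to \cite{Lat97} rather than reprove the inequality is exactly the paper's treatment, and for the purposes of this paper that is sufficient.

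As a standalone proof sketch, however, there is a concrete gap beyond the ``assembly'' difficulty you already flag. The tail estimate you invoke, $\snorm{\sum_i W_i}_p \lesssim \snorm{\max_i W_i}_p + n\,\mb{E}[W]$ for independent nonnegative $W_i$ with an \emph{absolute} constant, is false. Take $W_i$ i.i.d.\ Bernoulli$(1/n)$: then $\sum_i W_i$ is essentially Poisson$(1)$, so $\snorm{\sum_i W_i}_p \asymp p/\log p$, while $\snorm{\max_i W_i}_p \le 1$ and $n\,\mb{E}[W]=1$. The correct Hoffmann--J{\o}rgensen/Johnson--Schechtman--Zinn form carries a factor of order $p/\log p$, and this factor is not an artifact: in the symmetric example $X=\pm 1$ with probability $1/(2n)$ each, Lata{\l}a's supremum is attained near $s\approx\log p$ and is itself $\asymp p/\log p$, so your truncation scheme as written would ``prove'' a strictly stronger bound ($O(1)$ in this example) that the true answer violates. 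The intermediate exponents $2<s<p$ in the supremum exist precisely to capture this Poisson-type regime, so any truncation proof must either keep the $p/\log p$ factor and then redo the optimization over $u$ and $s$ (the step you leave open), or follow Lata{\l}a's stratification through order statistics. The Khintchine--Rosenthal bound $\snorm{\sum_i X_i^{\le u}}_p \lesssim \sqrt{pn\,\mb{E}[X^2\one_{|X|\le u}]}+pu$ for the truncated part is fine; it is the heavy-tail half of your argument that does not close as stated.
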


\begin{lemma}\label{lem:big-data}
Let $Q_{t}$ denote ORA of length $t$, and let $X_t = Q_tX_0$ with $t\ge C_{\ref{prop:moment-mixing}}d\log d\log n$. Choose a uniformly random set $S$ of indices of size $|S| = k$. If $k = C_{\ref{lem:big-data}}\epsilon^{-2}\log n$, then
\[\mb{P}\bigg[\sum_{i\in S}X_t[i]^2\not\in\frac{k}{d}[1-\epsilon,1+\epsilon]\bigg]\le n^{-3}.\]
\end{lemma}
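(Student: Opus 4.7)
The plan is to bound the $p$-th moment $\mb{E}[(T - k/d)^p]$ for $p = 6\log n$ and apply Markov's inequality, where $T := \sum_{i \in S}X_t[i]^2$. Working conditionally on $X_t$, I first replace the uniform size-$k$ subset $S$ by an i.i.d.\ size-$k$ sample $I_1, \ldots, I_k$ from $[d]$; the two models agree up to constant factors in moment bounds via the negative-association property of sampling without replacement. Then $T - k/d$ is comparable to $\sum_{j=1}^k Z_j$ with $Z_j := X_t[I_j]^2 - 1/d$ i.i.d.\ mean-zero (conditional on $X_t$), and $\snorm{Z_j}_s \lesssim (M_{2s}(X_t)/d)^{1/s}$, where $M_{2s}(X_t) := \sum_i X_t[i]^{2s}$. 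After symmetrizing to reduce to symmetric i.i.d.\ summands, applying \cref{lem:latala} conditionally on $X_t$ yields
\[
\snorm{T - k/d}_p \;\lesssim\; \sup_{s \in [2,p]} \frac{p}{s} \left(\frac{k\,M_{2s}(X_t)}{p\,d}\right)^{1/s},
\]
whose right-hand side is dominated, in the relevant parameter range, by the $s=2$ contribution $\sqrt{pk M_4(X_t)/d}$.

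Raising to the $p$-th power and taking expectation over $X_t$ gives
\[
\mb{E}\bigl[(T-k/d)^p\bigr] \;\lesssim\; C^p (pk/d)^{p/2}\, \mb{E}\bigl[M_4(X_t)^{p/2}\bigr].
\]
The crux is to establish $\mb{E}[M_4(X_t)^{p/2}] \lesssim (C/d)^{p/2}$. \cref{prop:moment-mixing} only gives the first-moment bound $\mb{E}[M_4] \le 12/d$, and naive H\"older applied to the combinatorial expansion of $M_4^{p/2}$ loses polynomial-in-$p$ factors that would ultimately force extraneous logarithmic factors in the final value of $k$. To obtain a sharp bound I would expand $M_4^{p/2} = \sum_{\vec i \in [d]^{p/2}} \prod_j X_t[i_j]^4$, group by repetition pattern, and on the dominant contribution from all-distinct $\vec i$ exploit the exchangeability of $X_t$ (furnished by the diagonal Rademacher preconditioning $D'$ built into \cref{alg:averaging}) together with a negative-correlation / FKG inequality for squared coordinates on the sphere---the same ingredient underlying \cref{lem:moment-step}---to conclude $\mb{E}\bigl[\prod_j X_t[i_j]^4\bigr] \le \prod_j \mb{E}[X_t[i_j]^4] \lesssim (C/d^2)^{p/2}$. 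Patterns with repeated indices are controlled by \cref{prop:moment-mixing} at higher $s$ and contribute at strictly lower order.

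Finally, Markov's inequality gives
\[
\mb{P}\bigl[|T - k/d| > \epsilon k/d\bigr] \;\le\; \left(\frac{Cp}{\epsilon^2 k}\right)^{p/2},
\]
and for $p = 6\log n$ and $k \ge C_{\ref{lem:big-data}}\epsilon^{-2}\log n$ with a sufficiently large absolute constant $C_{\ref{lem:big-data}}$, this is at most $n^{-3}$. The central technical obstacle is the higher-moment control of $M_4(X_t)$: improving on the single-moment bound of \cref{prop:moment-mixing} without losing polynomial-in-$p$ factors requires the correlation argument sketched above, or, as a potential alternative, a high-probability bound on $\snorm{X_t}_\infty$ via \cref{prop:max-coord} combined with refined moment computations and a careful truncation.
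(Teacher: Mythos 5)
Your skeleton (reduce sampling without replacement to i.i.d.\ indices, bound the $p$-th moment with $p\asymp\log n$ via Lata{\l}a, finish with Markov) is the same as the paper's, and the first reduction is fine --- the paper proves the same comparison by a Jensen/symmetry argument rather than invoking negative association. The genuine gap is in how you feed Lata{\l}a's inequality. You condition on $X_t$ and assert that the supremum over $s$ is dominated by the $s=2$ term $\sqrt{pkM_4(X_t)/d}$; pointwise in $X_t$ this is false. For an atypical realization in which some coordinate of $X_t$ is of constant order, $M_{2s}(X_t)=\Theta(1)$ for every $s$, so the $s\approx p$ term of your supremum is of constant order while the $s=2$ term can be $\ll 1$; hence integrating the $p$-th power over $X_t$ forces you to control high moments of $M_{2s}(X_t)$ for \emph{all} $2\le s\le p$, not just $\mb{E}[M_4(X_t)^{p/2}]$. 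Moreover, your route to $\mb{E}[M_4(X_t)^{p/2}]\lesssim(C/d)^{p/2}$ rests on two unsupported claims: a global negative-association inequality $\mb{E}\bigl[\prod_j X_t[i_j]^4\bigr]\le\prod_j\mb{E}\bigl[X_t[i_j]^4\bigr]$ for the ORA distribution, which does not follow from the two-coordinate \emph{conditional} FKG step used inside \cref{lem:moment-step}; and coordinate exchangeability supposedly furnished by the diagonal Rademacher preconditioning --- sign randomization does not permute coordinates, and \cref{lem:big-data} concerns plain ORA applied to a fixed $X_0$ in any case, so the per-coordinate bounds $\mb{E}[X_t[i]^4]\lesssim d^{-2}$ are not available this way.

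The idea you are missing, which is how the paper sidesteps all of this, is to symmetrize against an independent copy of the \emph{entire walk}: setting $Y_t=Q_t'X_0$ with $Q_t'$ independent of $Q_t$ and using the same sampled indices, the summands $X_t[i_j]^2-Y_t[i_j]^2$ are symmetric, and Lata{\l}a's inequality is then applied with the \emph{unconditional} moments $\mb{E}_{Q_t,i_1}|X_t[i_1]|^{2s}=\mb{E}_{Q_t}\frac1d\sum_i X_t[i]^{2s}$. These are exactly the quantities bounded by \cref{prop:moment-mixing} for every integer $s\le p$ (this is precisely why the hypothesis $t\ge C_{\ref{prop:moment-mixing}}d\log d\log n$ appears), giving $(\mb{E}|X_t[i_1]|^{2s})^{1/s}\lesssim s/d$; only with these bounds in hand does the supremum legitimately collapse to $s=2$ (using $k\ge p$), after which your Markov computation goes through verbatim. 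If you insist on your conditional framework, you would need quantitative high-moment or truncation control of all the $M_{2s}(X_t)$ (e.g.\ via $\snorm{X_t}_\infty$ and \cref{prop:max-coord}), which you only gesture at as an alternative; as written, the argument is incomplete at its central step.
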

\begin{proof}
Choose $k$ independent random indices $i_1,\ldots,i_k$, potentially repeated. We first show that for any $p\ge 1$,
\begin{equation}\label{eq:unif-to-indep}
\mb{E}\bigg|\sum_{i\in S}X_t[i]^2-\frac{k}{d}\bigg|^p\le\mb{E}\bigg|\sum_{j=1}^kX_t[i_j]^2-\frac{k}{d}\bigg|^p.
\end{equation}
To see this, consider the joint distribution on $[d]^{k} \times \binom{[d]}{k}$ given by $(i_1,\dots, i_k, T)$, where $i_1,\dots,i_k$ are independent random indices, potentially repeated, and $T$ is a set of size $k$, chosen uniformly at random from among all subsets of $[d]$ of size $k$ containing $\{i_1,\dots, i_k\}$. Note in particular that by symmetry, the marginal distribution of $T$ is  
uniform on $\binom{[d]}{k}$. Therefore, \cref{eq:unif-to-indep} will follow from the law of total probability if we can show that
\[\bigg|\sum_{i\in S}X_t[i]^2-\frac{k}{d}\bigg|^p\le\mb{E}\bigg[\bigg|\sum_{j=1}^kX_t[i_j]^2-\frac{k}{d}\bigg|^p\bigg|T=S\bigg]\]
for all $|S| = k$. But now, notice that the distribution on $(i_1,\ldots,i_k)$ conditioned on $T = S$ is  \emph{some} distribution on $S^k$ which is symmetric under permutations of $S$. Thus, Jensen's inequality immediately implies \cref{eq:unif-to-indep}.

For the remainder of the proof, we will focus on the model with $k$ independent random indices. Let $Y_t = Q'_{t}Y_0$, where $Y_0 = X_0$ and $Q'_{t}$ is an independent copy of $Q_{t}$. We have
\begin{align*}
\bigg(\mb{E}\bigg|\sum_{j=1}^kX_t[i_j]^2-\frac{k}{d}\bigg|^p\bigg)^{1/p}&\le\bigg(\mb{E}\bigg|\sum_{j=1}^kX_t[i_j]^2-Y_t[i_j]^2\bigg|^p\bigg)^{1/p}\\
&\le C_{\ref{lem:latala}}\sup_{2\le s\le p}\frac{p}{s}\bigg(\frac{k}{p}\bigg)^{1/s}(\mb{E}|X_t[i_1]^2-Y_t[i_1]^2|^s)^{1/s}\\
&\le 2C_{\ref{lem:latala}}\sup_{2\le s\le p}\frac{p}{s}\bigg(\frac{k}{p}\bigg)^{1/s}(\mb{E}_{\{Q_{t}, i_1\}}|X_t[i_1]|^{2s})^{1/s}\\
&= 2C_{\ref{lem:latala}}\sup_{2\le s\le p}\frac{p}{s}\bigg(\frac{k}{p}\bigg)^{1/s}\bigg(\mb{E}_{Q_t}\frac{1}{d}\sum_{i=1}^d|X_t[i]|^{2s}\bigg)^{1/s},
\end{align*}
where the first line uses Jensen's inequality, the second line uses \cref{lem:latala}, and the third line uses the triangle inequality.

By \cref{prop:moment-mixing}, if $1\le s\le p$ is an integer, then 
\[\mb{E}_{Q_t}\frac{1}{d}\sum_{i=1}^dX_t[i]^{2s}\le\frac{2^{s-2}(2s)!d^{1-s}}{s!}\]
as long as $t\ge C_{\ref{prop:moment-mixing}}pd\log d$. This (combined with H\"older's inequality to interpolate non-integer moments) shows that
\[\bigg(\mb{E}\bigg|\sum_{j=1}^kX_t[i_j]^2-\frac{k}{d}\bigg|^p\bigg)^{1/p}\le\bigg(\mb{E}\bigg|\sum_{j=1}^kX_t[i_j]^2-Y_t[i_j]^2\bigg|^p\bigg)^{1/p}\le 2C_{\ref{lem:latala}}\sup_{2\le s\le p}\frac{p}{s}\bigg(\frac{k}{p}\bigg)^{1/s}\frac{10s}{d}.\]
Now \cref{eq:unif-to-indep} gives
\[\bigg(\mb{E}\bigg|\sum_{i\in S}X_t[i]^2-\frac{k}{d}\bigg|^p\bigg)^{1/p}\le 2C_{\ref{lem:latala}}\sup_{2\le s\le p}\frac{p}{s}\bigg(\frac{k}{p}\bigg)^{1/s}\frac{10s}{d}.\]
Now, for $k = C\epsilon^{-2}\log n$ and $p = \log n$, we see that the supremum is attained at $s = 2$, so that by Markov's inequality,
\[\mb{P}\bigg[\sum_{i\in S}X_t[i]^2\not\in\frac{k}{d}[1-\epsilon,1+\epsilon]\bigg]\le\bigg(\bigg(\frac{d}{k\epsilon}\bigg)\cdot 20C_{\ref{lem:latala}}p\bigg(\frac{k}{p}\bigg)^{1/2}\frac{1}{d}\bigg)^p.\]
Choosing $C > 10^6C_{\ref{lem:latala}}^2$, we find that this is less than $1/n^3$, as desired.
\end{proof}

We are now ready to prove \cref{thm:discrete-main}.

\begin{proof}[Proof of \cref{thm:discrete-main}]
Let $q = K_1/d$. Applying \cref{thm:dirksen} and using \cref{prop:max-coord} at time $t = T_1 = C_{\ref{prop:max-coord}}d\log d\log\log d$, we see that
\[\widetilde{\Psi_1} = \frac{1}{\sqrt{q}}\on{Proj}_{d,q}\circ Q_{T_1},\]
with probability $1-O(1/d)$, satisfies $\mb{P}[\delta_s(\widetilde{\Psi_1})\ge\epsilon/4]\le 1/d$ as long as
\[K_1\gtrsim s(\log d)\epsilon^{-2}(\log s)^2(\log K_1)(\log d).\]
Note that in the case $K_1 = d$, the operator $\widetilde{\Psi_1}$ is actually orthogonal.

Now by \cref{thm:krahmer-ward}, we have that if 
$s\ge 40\log(4n/\eta)$, then $\Psi_1$ acts as a $(1\pm\epsilon)$-isometry on our set of points $X$ with probability at least $1-\eta$. Choosing $\eta = 1/4$ and $s = 40\log(16n)$, we see that this property holds with probability at least $3/4-O(1/d)$ as long as
\[K_1\gtrsim\epsilon^{-2}(\log n)(\log d)^2(\log\log n)^2(\log K_1).\]
Since $K_1\le d$,
\[K_1\gtrsim\epsilon^{-2}(\log n)(\log d)^3(\log\log n)^2\]
certainly suffices. This indeed holds based on the choice of $K_1$ in \cref{alg:averaging}, as long as $C_{\ref{alg:averaging}}$ is chosen large enough. Note that if we use S-ORA instead of ORA, then by \cref{lem:decomposition}, this holds also for $\widetilde{\Psi_1}$, so that indeed, the random diagonal Rademacher matrix $D$ may be excluded. Furthermore, due to the permutation symmetry in S-ORA established by \cref{lem:perm-invar}, we can replace $\on{Proj}_{d,q}$ in the definition of $\wt{\Psi_1}$ by $\on{Proj}_{\on{Binom}(d,q)}$, similar to the argument in the proof of \cref{thm:uniform-main} (the symmetrization to $\Psi_2$ is similar and we will not further elaborate on this point).

By Markov's inequality, the probability that the actual number of dimensions in the image of $\on{Proj}_{d,q}$ is more than a constant times $K_1$ is sufficiently small, so with probability at least $5/7$ we have that $\Psi_{1}$ is a $(1\pm\epsilon)$-isometry on our points and projects down to at most $O(\epsilon^{-2}(\log n)(\log d)^3(\log\log n)^2)$ dimensions.

To finish, we claim that 
$\Psi_2$ is a $(1\pm\epsilon)$-isometry on the image of our point set, $\Psi_1X$ -- as long as $C_{\ref{alg:averaging}}$ is large enough, this follows immediately by using  \cref{lem:big-data} and taking the union bound over all $n$ vectors in the image. Since $\Psi_1,\Psi_2$ are both $(1\pm\epsilon)$-isometries on the relevant sets of points, we are immediately done (after rescaling $\epsilon$): the desired isometry property holds with probability at least, say, $2/3$.

Finally, the analysis of the running time and memory of \cref{alg:averaging} is essentially identical to that of \cref{alg:uniform}.
\end{proof}
\section{Open Problems}
\label{sec:open}
The most immediate problem left open by our work is to remove the additional $\log\log{d}$ term from \cref{thm:discrete-main}, and bring the ORA-based \cref{alg:averaging} on par with Kac walk and Hadamard matrix based transforms. Another intriguing question is whether algorithms based on the Kac walk can be used to successfully design optimal JL transforms beyond \cref{eqn:bk}/ optimal RIP transforms beyond \cref{eqn:ah}, running in time $O(d\log{d})$; indeed, the appearance of the error term $O(d\log{n})$ in our bounds (as opposed to $O(\epsilon^{-2}d\log{n})$) provides evidence that Kac walk based transforms outperform Hadamard matrix based transforms in large-data/high-accuracy regimes. Finally, it would be very interesting to compare how implementations of Kac walk or ORA-based transforms (optimized for issues/features such as cache locality, parallelization, and memory efficiency) compare to transforms based on Hadamard matrices; see \cite{choromanski2019unifying} for some experimental results in this direction.

\section{Acknowledgements}
We thank Haim Avron and Sourav Chatterjee for helpful comments on an early version of this paper.

\bibliographystyle{amsplain0.bst}
\bibliography{main.bib}
\appendix
\section{Proof of \texorpdfstring{\cref{lem:contraction-bound}}{Lemma 3.2}}
\label{app:proof-contraction}
\begin{proof}
Let $A_{t}[i]=X_{t}[i]^{2}$ and $B_{t}[i]=Y_{t}[i]^{2}$ for all
$t\ge0$ and $i\in[d]$ and recall that $X_t$ and $Y_t$ are coupled as in \cref{def:proportional-coupling}. We calculate,
{\small
\begin{align*}
\mb{E}\left[\sum_{k=1}^{d}(A_{1}[k]-B_{1}[k])^{2}\right] & =\frac{2}{d(d-1)}\sum_{1\le i<j\le d}\mb{E}\left[\sum_{k=1}^{d}(A_{1}[k]-B_{1}[k])^2|(i_{0},j_{0})=(i,j)\right]\\
 & =\frac{2}{d(d-1)}\frac{(d-1)(d-2)}{2}\sum_{k=1}^{d}(A_{0}[k]-B_{0}[k])^{2}\\
 & \quad+\frac{2}{d(d-1)}\sum_{i<j}\mb{E}\left[\left((A_{0}[i]+A_{0}[j])\cos^{2}\varphi-(B_{0}[i]+B_{0}[j])\cos^{2}\varphi\right)^{2}\right]\\
 & \quad+\frac{2}{d(d-1)}\sum_{i<j}\mb{E}\left[\left((A_{0}[i]+A_{0}[j])\sin^{2}\varphi-(B_{0}[i]+B_{0}[j])\sin^{2}\varphi\right)^{2}\right]\\
 & =\frac{d-2}{d}\sum_{k=1}^{d}(A_{0}[k]-B_{0}[k])^{2}\\
 & \quad+\frac{4}{d(d-1)}\mb{E}[\cos^{4}\varphi]\sum_{i<j}\left((A_{0}[i]+A_{0}[j])-(B_{0}[i]+B_{0}[j])\right)^{2}\\
 & =\left(1-\frac{2}{d}\right)\sum_{k=1}^{d}(A_{0}[k]-B_{0}[k])^{2}+\frac{3}{2d(d-1)}\sum_{i<j}\left((A_{0}[i]+A_{0}[j])-(B_{0}[i]+B_{0}[j])\right)^{2}\\
 & =\left(1-\frac{2}{d}\right)\sum_{k=1}^{d}(A_{0}[k]-B_{0}[k])^{2}+\frac{3}{2d(d-1)}\sum_{i<j}\left((A_{0}[i]-B_{0}[i])^{2}+(A_{0}[j]-B_{0}[j])^{2}\right)\\
 & \quad+\frac{3}{d(d-1)}\sum_{i<j}(A_{0}[i]-B_{0}[i])(A_{0}[j]-B_{0}[j])\\
 & =\left(1-\frac{2}{d}\right)\sum_{k=1}^{d}(A_{0}[k]-B_{0}[k])^{2}+\frac{3}{2d}\sum_{k=1}^{d}(A_{0}[k]-B_{0}[k])^{2}\\
 & \quad+\frac{3}{d(d-1)}\sum_{i<j}(A_{0}[i]-B_{0}[i])(A_{0}[j]-B_{0}[j])\\
 & =\left(1-\frac{1}{2d}\right)\sum_{k=1}^{d}(A_{0}[k]-B_{0}[k])^{2}+\frac{3}{d(d-1)}\sum_{i<j}(A_{0}[i]-B_{0}[i])(A_{0}[j]-B_{0}[j])\\
 & =\left(1-\frac{1}{2d}\right)\sum_{k=1}^{d}(A_{0}[k]-B_{0}[k])^{2}\\
 & \quad+\frac{3}{2d(d-1)}\left(\left(\sum_{k=1}^{d}(A_{0}[k]-B_{0}[k])\right)^{2}-\sum_{k=1}^{d}(A_{0}[k]-B_{0}[k])^{2}\right)\\
 & =\left(1-\frac{1}{2d}-\frac{3}{2d(d-1)}\right)\sum_{k=1}^{d}(A_{0}[k]-B_{0}[k])^{2},
\end{align*}
}

where the last equality uses $\sum_{k=1}^{d}A_{0}[k]=1=\sum_{k=1}^{d}B_{0}[k]$.
Thus, we have
\[
\mb{E}\left[\sum_{k=1}^{d}(A_{1}[k]-B_{1}[k])^{2}\right]\le\left(1-\frac{1}{2d}\right).
\]

For $t\ge0$, let $\mathcal{F}_{t}$ denote the $\sigma$-algebra generated
by the random variables $X_{0},\dots,X_{t}$ and $Y_{0},\dots,Y_{t}$.
Repeatedly applying the previous inequality, we have for all $t\ge0$
that 
\begin{align*}
\mb{E}\left[\sum_{k=1}^{d}\left(A_{t}[k]-B_{t}[k]\right)^{2}\right] & =\mb{E}\left[\mb{E}\left[\sum_{k=1}^{d}(A_{t}[k]-B_{t}[k])^{2}\mid\mathcal{F}_{t-1}\right]\right]\\
 & \le\left(1-\frac{1}{2d}\right)\mb{E}\left[\sum_{k=1}^{d}(A_{t-1}[k]-B_{t-1}[k])^{2}\right]\\
 & \le\left(1-\frac{1}{2d}\right)^{t}\sum_{k=1}^{d}\mb{E}\left[(A_{0}[k]-B_{0}[k])^{2}\right]\\
 & \le2\left(1-\frac{1}{2d}\right)^{t}
\end{align*}
as desired.
\end{proof}

\section{Proof of \cref{lem:perm-invar}}\label{app:symmetry}
Throughout this section, we will freely use various tools from nonabelian Fourier analysis; we refer the reader to \cite{DS81} for an introduction to such techniques. We will let $\rho$ denote an irreducible representation of $\mf{S}_d$, i.e., $\rho\in\wh{\mf{S}_d}$ and $d_{\rho}$ denote its dimension. Since $\mf{S}_{d}$ is finite, all its finite dimensional representations are unitarizable, and we will work with a choice of inner product such that irreducible representations are also unitary. In particular, various appearances of $\dagger$ should be understood as the operator-theoretic adjoint with respect to the appropriate inner product. A sum over nontrivial irreducible representations will be denoted $\sum_\rho'$. 
The key estimate we need is the following purely probabilistic claim regarding permutations.
\begin{lemma}\label{lem:perm-mixing}
Let $\sigma=(\sigma_1,\ldots,\sigma_T)$ be uniformly randomly chosen transpositions in $\mf{S}_d$ and suppose $d\ge 10$. Let $\xi_i\sim\on{Ber}(1/2)$ for $1\le i\le T$. Define
\[P_\sigma = \sigma_1^{\xi_1}\cdots\sigma_T^{\xi_T}.\]
Then
\[\mb{E}_{\sigma, \xi}[\on{TV}(P_\sigma,\on{Unif}_{\mf{S}_d})]\le C_{\ref{lem:perm-mixing}}\bigg(d^{1/2}e^{-T/(6d)} + (d!)^{1/2}\bigg(\frac{\sqrt{5}-1}{2}\bigg)^{T/2}\bigg)\]
for an absolute constant $C_{\ref{lem:perm-mixing}} > 0$.
\end{lemma}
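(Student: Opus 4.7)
The plan is to apply the non-abelian Fourier-analytic framework for random walks on $\mf{S}_d$ developed by Diaconis and Shahshahani \cite{DS81}, adapted to our Bernoulli-randomized setting via an idempotency observation. Fix $\sigma = (\sigma_1, \ldots, \sigma_T)$. Conditional on $\sigma$, the distribution of $P_\sigma$ (as $\xi$ varies) is a probability measure on $\mf{S}_d$ whose Fourier coefficient at an irreducible representation $\rho$ factors, by independence of the $\xi_i$ together with $\mb{E}_{\xi}[\rho(\sigma^{\xi})] = \tfrac{1}{2}(I + \rho(\sigma))$ for Bernoulli $\xi$, as
\[\wh{P_\sigma}(\rho) \;=\; \prod_{i=1}^T M_i, \qquad M_i := \tfrac{1}{2}(I + \rho(\sigma_i)).\]
Diaconis's upper bound lemma then yields $4\,\on{TV}(P_\sigma, \on{Unif}_{\mf{S}_d})^2 \le \sum_\rho{}' d_\rho \snorm{\wh{P_\sigma}(\rho)}_F^2$.

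The key observation is that since each $\sigma_i$ is a transposition (hence an involution), $\rho(\sigma_i)^2 = I$, so each $M_i$ is a Hermitian \emph{orthogonal projection}. Using $M_T^2 = M_T$ and cyclicity of the trace, $\snorm{\wh{P_\sigma}(\rho)}_F^2 = \on{tr}(M_1 \cdots M_{T-1} M_T M_{T-1} \cdots M_1)$. Only $M_T$ depends on $\sigma_T$, and since transpositions form a single conjugacy class of $\mf{S}_d$, Schur's lemma gives $\mb{E}_{\sigma_T}[M_T] = \tfrac{1 + r(\rho)}{2}\, I$, where $r(\rho) := \chi_\rho(\tau)/d_\rho$ for any transposition $\tau$. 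Substituting and then invoking $M_{T-1}^2 = M_{T-1}$ reduces the expression to the same shape with $T$ replaced by $T-1$ and a factor of $\tfrac{1 + r(\rho)}{2}$ extracted. Iterating $T$ times yields
\[\mb{E}_\sigma \snorm{\wh{P_\sigma}(\rho)}_F^2 \;=\; d_\rho \paren{\tfrac{1 + r(\rho)}{2}}^T.\]

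Combining with Jensen's inequality $(\mb{E} X)^2 \le \mb{E} X^2$, we obtain
\[\mb{E}_\sigma\, \on{TV}(P_\sigma, \on{Unif}_{\mf{S}_d}) \;\le\; \tfrac{1}{2}\sqrt{\sum_\rho{}' d_\rho^2 \paren{\tfrac{1 + r(\rho)}{2}}^T},\]
and we split the sum into two categories. Category (a) consists of the ``near-trivial'' representations, i.e., the standard representation $(d-1,1)$, the partitions $(d-2,2)$ and $(d-2,1,1)$, and their Young conjugates. For these, $d_\rho$ is polynomial in $d$ and $r(\rho) = 1 - \Theta(1/d)$; a direct calculation bounds their total contribution, after taking the square root, by the first term $d^{1/2}e^{-T/(6d)}$. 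Category (b) consists of all remaining non-trivial representations; here, Diaconis--Shahshahani's character estimates from \cite{DS81} give $|r(\rho)| \le \sqrt{5} - 2$, so $\tfrac{1 + r(\rho)}{2} \le \tfrac{\sqrt{5} - 1}{2}$. Combined with the Plancherel identity $\sum_{\rho \in \wh{\mf{S}_d}} d_\rho^2 = d!$, this category contributes, after the square root, the second term $(d!)^{1/2}\bigl((\sqrt{5}-1)/2\bigr)^{T/2}$.

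The main technical ingredient -- and the principal obstacle -- is the idempotency-based recursive expectation in Step 2. Without the idempotency $M_i^2 = M_i$ (which arises specifically because transpositions are involutions), one could not telescope $\snorm{\prod_i M_i}_F^2$ into a form amenable to taking expectations one $\sigma_i$ at a time; with it, the computation closes in a single recursive line. The character estimates in the final step are classical and are imported directly from \cite{DS81}.
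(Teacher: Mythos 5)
Your Fourier-analytic setup is exactly the paper's: the factorization $\wh{P_\sigma}(\rho)=\prod_{i=1}^T\tfrac12(I+\rho(\sigma_i))$, the observation that each factor is a Hermitian projection (the paper phrases this via unitarity and $\rho(\sigma_i)^2=I$, which is the same thing), the Schur's-lemma recursion giving $\mb{E}_\sigma\on{Tr}\big(\wh{P_\sigma}(\rho)\wh{P_\sigma}(\rho)^\dagger\big)=d_\rho\big(\tfrac{1+r_\rho}{2}\big)^T$, and the Cauchy--Schwarz/Jensen reduction to $\big[\sum_{\rho\neq\mathrm{triv}}d_\rho^2\big(\tfrac{1+r_\rho}{2}\big)^T\big]^{1/2}$ all match. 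The gap is in your final step. Your split rests on the claim that every nontrivial irreducible representation other than $(d-1,1)$, $(d-2,2)$, $(d-2,1,1)$ and their conjugates satisfies $|r_\rho|\le\sqrt{5}-2$. That is false, and it is not a character estimate proved in Diaconis--Shahshahani. By the Frobenius formula $r_\lambda=\tfrac{2}{d(d-1)}\sum_j\big[\binom{\lambda_j}{2}-\binom{\lambda_j'}{2}\big]$, any partition with a long first row has character ratio near $1$: for instance $\lambda=(d-k,1^k)$ has $r_\lambda\approx(1-k/d)^2$, which exceeds $\sqrt{5}-2\approx 0.236$ for all $k$ up to roughly $d/2$, while its dimension $\binom{d-1}{k}$ grows super-polynomially (even exponentially) in $d$; similarly $(d/2,d/2)$ has $r_\lambda\approx 1/2$. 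So the family of representations with large character ratio is not a fixed finite list of polynomial-dimensional representations, and your ``direct calculation'' for Category (a) cannot absorb them into $d^{1/2}e^{-T/(6d)}$, while Category (b) as defined simply does not contain them.

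The constant $\sqrt{5}-2$ in the paper does not come from a character bound at all: it is the threshold for the elementary inequality $\big(\tfrac{1+r}{2}\big)^3\le r\le\tfrac1d+\tfrac{d-1}{d}r$, valid exactly for $r\ge\sqrt{5}-2$, which lets one dominate $\big(\tfrac{1+r_\rho}{2}\big)^T$ by $\big(\tfrac1d+\tfrac{d-1}{d}r_\rho\big)^{T/3}$ for those representations. All representations with $r_\rho\ge\sqrt5-2$ (however many, and however large their dimensions) are then handled in one stroke by the genuine Diaconis--Shahshahani technical estimate $\big[\sum_{\rho\neq\mathrm{triv}}d_\rho^2\big(\tfrac1d+\tfrac{d-1}{d}r_\rho\big)^{2k}\big]^{1/2}\le Cde^{-2k/d}$, an aggregate bound over the entire dual that balances dimensions against character ratios; the remaining representations contribute at most $d!\big(\tfrac{\sqrt5-1}{2}\big)^T$ by Plancherel, exactly as you argue for your Category (b). To close your gap you would need to import (or re-derive) that aggregate estimate, at which point you recover the paper's proof; the finite-list dichotomy as you stated it does not work.
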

\begin{remark}
The given proof can be modified (with more careful character estimates similar to \cite{DS81}) to show the quantity studied tends to $0$ once $T$ passes $2d\log d$ (with a $\Theta(d)$ rate). It is an interesting question as to whether this is the sharp cutoff.
\end{remark}
\begin{proof}
Let $U: \mf{S}_d\to\mb{C}$ be $1/d!$ everywhere. For any permutation $\tau$, let $f_\tau: \mf{S}_d\to\mb{C}$ be $1/2$ at the identity and $\tau$, and $0$ elsewhere. We note that $\wh{U}(\rho) = 0$ for nontrivial representations $\rho$. We also note that
\[\mb{P}[P_\sigma=\tau] = f_{\sigma_1}\ast\cdots\ast f_{\sigma_T}(\tau)\]
by the definition of convolution. The Fourier coefficient of this function at $\rho$ is
\[A_\sigma(\rho):= \wh{f_{\sigma_1}}(\rho)\cdots\wh{f_{\sigma_T}}(\rho).\]

By the proof of the upper bound lemma of Diaconis and Shahshahani \cite{DS81}, we have
\begin{align*}
\mb{E}_\sigma[\on{TV}(P_\sigma,\on{Unif}_{\mf{S}_d})] &= \mb{E}_\sigma\bigg[\sum_{\tau\in\mf{S}_d}\bigg|\mb{P}[P_\sigma=\tau]-\frac{1}{d!}\bigg|\bigg]\\
&\le\bigg[\mb{E}_\sigma\bigg[ d!\sum_{\tau\in\mf{S}_d}\bigg(\mb{P}[P_\sigma=\tau]-\frac{1}{d!}\bigg)^2\bigg]\bigg]^{1/2}\\
&=\bigg[\mb{E}_\sigma\bigg[{\sideset{}{'}\sum_\rho} d_\rho\on{Tr}(A_\sigma(\rho)A_\sigma(\rho)^\dagger)\bigg]\bigg]^{1/2},
\end{align*}
where the first line is by definition, the second line uses Cauchy--Schwarz, and the third line uses Plancherel's formula. In the third line, we also used that $U$ has zero Fourier coefficient at nontrivial representations, and that the term at the trivial representation cancels out.

Next, we claim that
\[\mb{E}_{\sigma_i}\bigg[\wh{f_{\sigma_i}}(\rho)\wh{f_{\sigma_i}}(\rho)^\dagger\bigg] = c_\rho I_{d_\rho}\]
for a constant $c_\rho\in\mb{R}$. In fact, we can compute this constant explicitly. Let $\chi_\rho$ be the trace of $\rho$ evaluated at any transposition, and let $r_\rho = \chi_\rho/d_\rho$. It is worth noting that $|r_\rho|\le 1$ since unitary matrices have trace at most $d_\rho$. We find, since $\rho(\sigma_i)$ is unitary, that
\[\mb{E}_{\sigma_i}\bigg[\wh{f_{\sigma_i}}(\rho)\wh{f_{\sigma_i}}(\rho)^\dagger\bigg] = \mb{E}_{\sigma_i}\bigg[\bigg(\frac{I_{d_\rho}+\rho(\sigma_i)}{2}\bigg)\bigg(\frac{I_{d_\rho}+\rho(\sigma_i)}{2}\bigg)^\dagger\bigg] = \frac{1}{2}I_{d_\rho} + \frac{1}{2}\mb{E}_{\sigma_i}[\rho(\sigma_i)] = \bigg(\frac{1+r_\rho}{2}\bigg)I_{d_\rho}.\]
In the last step we noted that $\mb{E}_{\sigma_i}[\rho(\sigma_i)]$ is a multiple of the identity by Schur's lemma (or, it is the Fourier transform of a function constant on conjugacy classes) and has trace $\chi_\rho$ by definition (note that $\chi_{\rho}$ is real since $\rho(\sigma_i)$ is an involution). Thus
\[c_\rho = \frac{1+r_\rho}{2}.\]

Now, note that
\begin{align*}\mb{E}_\sigma\bigg[\on{Tr}(A_\sigma(\rho)A_\sigma(\rho)^\dagger)\bigg] &= \mb{E}_\sigma\bigg[\on{Tr}\bigg(\wh{f_{\sigma_1}}(\rho)\cdots\wh{f_{\sigma_T}}(\rho)\wh{f_{\sigma_T}}(\rho)^\dagger\cdots\wh{f_{\sigma_1}}(\rho)^\dagger\bigg)\bigg]\\
&= c_\rho\mb{E}_{\sigma_1,\ldots,\sigma_{T-1}}\bigg[\on{Tr}\bigg(\wh{f_{\sigma_1}}(\rho)\cdots\wh{f_{\sigma_{T-1}}}(\rho)\wh{f_{\sigma_{T-1}}}(\rho)^\dagger\cdots\wh{f_{\sigma_1}}(\rho)^\dagger\bigg)\bigg]\\
&= \cdots\\
&= d_\rho c_\rho^T = d_\rho\bigg(\frac{1+r_\rho}{2}\bigg)^T.
\end{align*}
Therefore
\[\mb{E}_\sigma[\on{TV}(P_\sigma,\on{Unif}_{\mf{S}_d}]\le\bigg[{\sideset{}{'}\sum_\rho}d_\rho^2\bigg(\frac{1+r_\rho}{2}\bigg)^T\bigg]^{1/2},\]
and it remains to bound the right side.

The key technical result in \cite[p.~27]{DS81} is that
\[\bigg[{\sideset{}{'}\sum_\rho}d_\rho^2\bigg(\frac{1}{d}+\frac{d-1}{d}r_\rho\bigg)^{2k}\bigg]^{1/2}\le Cde^{-2k/d}\]
where $C$ is an absolute constant independent of $d$.
Now if $r_\rho\in[\sqrt{5}-2,1]$ we have
\[0 < \bigg(\frac{1+r_\rho}{2}\bigg)^3\le r_\rho\le\frac{1}{d}+\frac{d-1}{d}r_\rho,\]
while if $r_\rho\in[-1,\sqrt{5}-2]$ we have
\[\bigg|\frac{1+r_\rho}{2}\bigg|\le\frac{\sqrt{5}-1}{2}.\]
Thus, using this, we see that if $6|T$ we have
\begin{align*}
{\sideset{}{'}\sum_\rho}d_\rho^2\bigg(\frac{1+r_\rho}{2}\bigg)^T&\le{\sideset{}{'}\sum_\rho}d_\rho^2\bigg(\frac{1}{d}+\frac{d-1}{d}r_\rho\bigg)^{T/3} + {\sideset{}{'}\sum_\rho}d_\rho^2\bigg(\frac{\sqrt{5}-1}{2}\bigg)^T\\
&\le Cde^{-T/(3d)} + d!\bigg(\frac{\sqrt{5}-1}{2}\bigg)^T.
\end{align*}
Since the $\on{TV}$ is decreasing as $T$ increases, the result follows immediately by rounding $T$ to the nearest multiple of $6$.
\end{proof}
Now we are ready to prove \cref{lem:perm-invar}.
\begin{proof}[Proof of \cref{lem:perm-invar}]
By two applications of \cref{lem:decomposition}, we see that
\begin{align}
\label{eq:sign-removal}
\on{TV}(Q_T,D_\xi Q_TD_{\xi'})\le \frac{2d\exp\big(-\frac{T}{d-1}\big)}{1-d\exp\big(-\frac{T}{d-1}\big)}
\end{align}
if $\xi,\xi'$ are independent random vectors which are uniform over $\{\pm 1\}^d$, conditioned on having product $1$.

Now, let $Q_T = R_{i_T,j_T,\theta_T}\cdots R_{i_1,j_1,\theta_1}$ as usual. For every pair of distinct indices $i,j\in[d]$, let $D_{i,j}$ be the random rotation in the $(i,j)$ plane by a uniform multiple of $\pi/2$. For every time $t\in[T]$, let $D_t$ be a random matrix distributed as $D_{i_t,j_t}$, sampled independently from everything except $(i_t,j_t)$. First, note that $R_{i_t,j_t,\theta_t}$ and $R_{i_t,j_t,\theta_t}D_t$ have the same distribution since our distribution $q$ on angles is invariant under $\theta\leftrightarrow\theta+k\pi/2$ for all $k\in\mb{Z}$. Second, note that the distributions
\[D_{i',j'}R_{i,j,\theta}\text{ and }R_{i,j,\theta}D_{i',j'}\]
are the same. The reason is more subtle than in the proof of \cref{lem:decomposition}. The point is that $D_{i',j'}$ merely permutes and signs the basis vectors $e_1,\ldots,e_d$ (via at worst a transposition). Thus conjugation of $R_{i,j,\theta}$ by $D_{i',j'}$ gives another rotation in a coordinate plane $(\sigma(i),\sigma(j))$ (where $\sigma$ is either the identity or the swap $(i'j')$), with its angle potentially changed via negation, addition by $\pi$, or both. Either way, we see $D_{i',j'}R_{i,j,\theta}D_{i',j'}^{-1}$ (conditional on the value $D_{i',j'}$) has the same distribution as $R_{i,j,\theta}$, hence the claim.

Now we extract the matrices $D_t$ similar to in the proof of \cref{lem:decomposition}. However, we must be slightly careful: note that $D_t$ is dependent on $(i_t,j_t)$, and the swapping operation above can potentially change a pair $(i_t,j_t)$ as we move past (which was not true before). Therefore, we will perform swaps in a way such that once $D_t$ has been extracted to the end, the rotation $R_{i_t,j_t,\theta_t}$ is not touched again. In fact, we were careful to do this already in the proof of \cref{lem:decomposition}, although this care was not needed there.

Specifically, we apply the first operation to $R_{i_T,j_T,\theta_T}$, and then apply the second operation repeatedly to switch the diagonal matrix $D_T$ to the end. Then we do the same for $R_{i_{T-1},j_{T-1},\theta_{T-1}}$, and so on. We thus see that $Q_T$ has the same distribution as
\[Q_TD_1\cdots D_T.\]
Let $\sigma_t = (i_tj_t)$ for $1\le t\le T$. Note that $D_1\cdots D_T$ is independent of $Q_T$ conditional on $\sigma=(\sigma_1,\ldots,\sigma_T)$, and is a signed permutation matrix with determinant $1$. Therefore it can be written uniquely as $D_1\cdots D_T=PD$, where $P$ is an unsigned permutation matrix and $D$ is a diagonal sign matrix, with $\det(PD) = 1$. Note that $(P,D)$ is independent of $Q_T$ conditional on $\sigma$.

Furthermore, we see that we can change $(P,D)$ into a joint distribution on signed permutation matrices (with determinant $1$) and diagonal matrices (with determinant $1$) which has a uniform marginal on $P$ while sacrificing at most a $\on{TV}$ of
\[\on{TV}(P_\sigma,\on{Unif}_{\mf{S}_d}),\]
where $P_\sigma$ is defined as in \cref{lem:perm-mixing}. This is since (conditional on $\sigma$) $D_1\cdots D_T$ induces a permutation on the coordinates $e_1,\ldots,e_d$ with the same distribution as $P_\sigma$.

Let $\Sigma$ be a uniform signed permutation matrix with determinant $1$. We deduce that there is a distribution of diagonal matrices $D$ (with determinant $1$), potentially dependent on $(i_1,j_1),\ldots,(i_T,j_T)$ and $\Sigma$, such that
\[\on{TV}(Q_TD_1\cdots D_T,Q_T\Sigma D)\le\mb{E}_\sigma\on{TV}(P_\sigma,\on{Unif}_{\mf{S}_d}).\]
Therefore, for $D_\xi,D_{\xi'}$ independent from everything as defined at the beginning, we have
\[\on{TV}(D_\xi Q_TD_{\xi'}, D_\xi Q_T\Sigma DD_{\xi'})\le\mb{E}_\sigma\on{TV}(P_\sigma,\on{Unif}_{\mf{S}_d}).\]
Regardless of the value of $D$, we see that the independent sign matrix $D_{\xi'}$ rerandomizes it so that $D_\xi Q_T\Sigma DD_{\xi'}$ and $D_\xi Q_T\Sigma$ have the same distribution. Using this, along with \cref{eq:sign-removal}, we deduce that
\[\on{TV}(Q_T, D_\xi Q_T\Sigma)\le\frac{2d\exp\big(-\frac{T}{d-1}\big)}{1-d\exp\big(-\frac{T}{d-1}\big)} + \mb{E}_\sigma\on{TV}(P_\sigma,\on{Unif}_{\mf{S}_d}),\]
and now \cref{lem:perm-mixing} finishes. Technically, we also note that $Q_T$ is invariant under taking transposes, so that we can also deduce a bound on $\on{TV}(Q_T,\Sigma Q_TD_\xi)$.
\end{proof}

\end{document}